 \newtheorem{theorem}{Theorem}
 \newtheorem{lemma}{Lemma}
 \newtheorem{corollary}{Corollary}
 \newtheorem{definition}{Definition}
 \newenvironment{proof}{\begin{trivlist} \item[]{\em Proof.}}{\end{trivlist}}
\begin{document}


%
\title{Ramanujan subspace pursuit for signal periodic decomposition}
%
%
%

\author{Deng~Shi-Wen*,
        Han~Ji-Qing*,~\IEEEmembership{Member,~IEEE,}
        \thanks{Manuscript received May X, 2015; revised XX XX, 2015.}

\IEEEcompsocitemizethanks{
\IEEEcompsocthanksitem *Deng~Shi-Wen is with School of Mathematical Sciences, Harbin Normal University, Harbin, China (e-mail: dengswen@gmail.com).\protect
\IEEEcompsocthanksitem *Han~Ji-Qing is with School of Computer Science and Technology, Harbin Institute of Technology, Harbin, China (e-mail: jqhan@hit.edu.cn).\protect\\
}
}

%
%

\markboth{IEEE transactions on signal processing,~Vol.~X, No.~X, XX~2015}%
{Shell \MakeLowercase{\textit{et al.}}: Bare Demo of IEEEtran.cls for Computer Society Journals}
%



\maketitle

\begin{abstract}


The period estimation and periodic decomposition of a signal are the long-standing problems in the field of signal processing and biomolecular sequence analysis.
To address such problems, we introduce the Ramanujan subspace pursuit (RSP) based on the Ramanujan subspace.
As a greedy iterative algorithm, the RSP can uniquely decompose any signal into a sum of exactly periodic components, by selecting and removing the most dominant periodic component from the residual signal in each iteration.
In the RSP, a novel periodicity metric is derived based on the energy of the exactly periodic component obtained by orthogonally projecting the residual signal into the Ramanujan subspace, and is then used to select the most dominant periodic component in each iteration.
To reduce the computational cost of the RSP,
we also propose the fast RSP (FRSP) based on the relationship between the periodic subspace and the Ramanujan subspace, and based on the maximum likelihood estimation of the energy of the periodic component in the periodic subspace.
The fast RSP has a lower computational cost and can decompose a signal of length $N$ into the sum of $K$ exactly periodic components in $ \mathcal{O}(K N\log N)$.
In addition,
our results show that the RSP outperforms the current algorithms for period estimation.

\end{abstract}

\begin{IEEEkeywords}
Period estimation, periodic decomposition, period detection, periodic signals, Ramanujan subspace.
\end{IEEEkeywords}


%
\IEEEpeerreviewmaketitle

\section{Introduction}

There exist many periodic patterns, such as periodicities, repetitions and regularities, in the Biomolecular sequence \cite{Anastassiou2001,Sharma2004,Arora2008,Adalbjornsson2015}, speech \cite{Roux2007}, music \cite{Sethares2005}, and machine vibration signal \cite{Muresan2003}.
To identify these periodic patterns is a long-standing problem in many application domains.
However, the traditional methods, such as the discrete Fourier transform (DFT), periodogram, and autocorrelation, cannot effectively detect these patterns, especially for the detection of the hidden periodicities \cite{Sethares99}.
Recently, various approaches have been proposed to perform the period estimation and the periodic decomposition of a signal based on subspace decomposition \cite{Sethares99, Muresan2003,Vaidyanathan2014_1,Vaidyanathan2014_2}, sparse decomposition \cite{Vaidyanathan2014C,Nakashizuka2008}, and integer period discrete Fourier transform (IPDFT) \cite{Epps2008,Epps2009}.
Unfortunately, most of these approaches usually suffer from some limitations and drawbacks, such as failure to accurately identify periodic components, limitation to be detected all periods, and higher computational cost.

By generating a set of so called `periodic subspaces', the algorithms based on subspace decomposition \cite{Sethares99,Muresan2003,Vaidyanathan2014_1,Vaidyanathan2014_2} can model and capture the periodic components of a signal by projecting the signal into these subspaces.
Each subspace consists of all the possible periodic signals (or components) with a specific period, and hence it is important for these methods to first define `what is the periodic signal?'.
With the conventional concept of the periodicity, i.e., if $x(n)$ is the signal with the period $q$, then it satisfies that $x(n)=x(n+q)$, Sethares and Staley \cite{Sethares99} constructed the periodic subspaces and proposed the periodicity transforms (PTs) \cite{Sethares99}.
The PTs can decompose a signal into a sum of the periodic components by projecting the signal into these periodic subspaces and updating residuals by removing the periodicities, iteratively.
However, there exist two drawbacks in the PTs.
The first one is that the periodic subspaces in the PTs are not orthogonal to each other, which results in that a periodic signal (or component) can lie in different subspaces.
For example, the periodic signal $\mathbf{1}$ (the vector of all ones) with the period $1$ can lie in any periodic subspace.
The second one is that the result of the PTs depends on the order in which the periodic components are extracted.
Due to these shortcomings, the PTs fails to accurately identify the periodic components in a signal.

To eliminate the ambiguity in the definition of the periodic subspace in the PTs, Muresan and Parks \cite{Muresan2003} first introduced the key concept of the exactly periodic signal.
Specifically, if $x(n)$ is an exactly periodic signal, then it satisfies that $x(n)=x(n+q)$ where $q$ is the smallest integer period.
Based on the definition of the exactly periodic signal, the authors constructed the exactly periodic subspaces (which are orthogonal to each other) by calculating the intersection of the periodic subspaces defined in the PTs \cite{Sethares99}, and proposed the orthogonal exactly periodic subspace decomposition (EPSD).
Moreover, when the aim is only to detect the periods in the signal and is not interested in its exactly periodic components, it is unnecessary to calculate the intersection of the periodic subspaces and the orthogonal projections in them.
The EPSD can perform the period estimation by directly and efficiently computing the energy of each exactly periodic component based on the maximum likelihood estimation (MLE) \cite{Wise1976}, which is a significant advantage of the EPSD.
However, the EPSD has the limitation that the exactly periodic subspaces are orthogonal to each other only when their periods are the divisors of the signal length.
For example, when the signal length is equal $12$, only the exactly periodic subspaces associated with the periods $\{1,2,3,4,6,12 \}$ are orthogonal to each other, but the exactly periodic subspaces corresponding to the periods $\{5,7,8,9,10, 11\}$ are not orthogonal.
Due to this limitation, the EPSD cannot accurately identify all periods in a signal.

More recently, Vaidyanathan \cite{Vaidyanathan2014_1,Vaidyanathan2014_2} presented the Ramanujan subspace based on the Ramanujan sums.
As an important contribution, the Ramanujan subspace is the first exactly periodic subspace that can be directly and definitely constructed.
Many interesting properties of the Ramanujan subspace, such as orthogonality and dimension, are also discussed.
As another important contribution, the relationship between the exactly periodic components and their frequencies are completely presented.
All these works are important to further explore the periodic patterns of signals.
Moreover, based on the Ramanujan subspace, Vaidyanathan also proposed the Ramanujan periodic transform (RPT) \cite{Vaidyanathan2014_2} to decompose a signal into a sum of exactly periodic components, by orthogonally projecting the signal into a set of Ramanujan subspaces which are orthogonal to each other.
Unfortunately, the RPT also suffers from the same limitation as the ESPD \cite{Muresan2003}, i.e., to construct orthogonal Ramanujan subspaces, the signal length must be a multiple of the least common multiplier of the expected periods.

Based on the framework of the sparse representation of a signal, Vaidyanathan \cite{Vaidyanathan2014C} and Nakashizuka \cite{Nakashizuka2008} proposed their methods for the periodic decomposition, respectively, with the predefined dictionary and the algorithms of sparse decomposition such as Basis Pursuit \cite{Chen1998} and LASSO \cite{Tibshirani1996}.
In \cite{Nakashizuka2008}, the dictionary is constructed based on the basis vectors defined in \cite{Sethares99}.
These basis vectors corresponding to different periods can span a set of periodic subspaces rather than the exactly periodic subspaces,
and hence the method in \cite{Nakashizuka2008} cannot accurately represent the exactly periodic components in the signal.
In \cite{Vaidyanathan2014C}, the frequency dictionary (named Farey dictionary) is constructed by partitioning the full bandwidth according to the Farey sequence.
In fact, the Farey dictionary is the frequency representation of the union of all possible Ramanujan subspaces.
For example, given the maximum period $Q=6$, the frequency bins contained in the Farey dictionary are $\left\{
\frac{0}{1},\frac{1}{6},\frac{1}{5},\frac{1}{4},\frac{1}{3},\frac{2}{5},\frac{1}{2},\frac{3}{5},
\frac{2}{3},\frac{3}{4},\frac{4}{5},\frac{5}{6},\frac{1}{1} \right\}$.
The frequency bins contained in the Ramanujan subspaces, $\mathcal{S}_1,\mathcal{S}_2,\mathcal{S}_3,\mathcal{S}_4,\mathcal{S}_5$, and $\mathcal{S}_6$, are
$\left\{\frac{1}{1} \right\}$, $\left\{\frac{1}{2} \right\}$, $\left\{\frac{1}{3},\frac{2}{3} \right\}$, $\left\{\frac{1}{4},\frac{3}{4} \right\}$, $\left\{\frac{1}{5},\frac{2}{5},\frac{3}{5},\frac{4}{5} \right\}$ and $\left\{\frac{1}{6},\frac{5}{6} \right\}$, respectively.
Obviously, the frequencies in the Farey dictionary with the maximum period $6$ are the union of that contained in all the Ramanujan subspaces with periods from $1$ to $6$ (note that the frequencies $0$ and $1$ are same).
Therefore, the sparse representation of a signal based on the Farey dictionary can reveal its exactly periodic components.
However, all the methods based on the sparse decomposition suffer from the significantly higher computational costs.
For the method in \cite{Vaidyanathan2014C}, the size of the Farey dictionary is dramatically increased with the maximum period $Q$.
For example, when $Q=512$, the number of the atoms in the Farey dictionary is $79852$, which requires a much higher computational cost to implement the sparse decomposition.

By considering the limitations of the DFT for period estimation and periodic decomposition,
Epps et al. \cite{Epps2008} presented the IPDFT by transforming a signal according to the period rather than the frequency as done in the DFT.
The IPDFT is defined by
\begin{align*}
	X(q)=\sum_{n=0}^{N-1} x(n)  \exp \left(-j\frac{2\pi n}{q} \right), q=1,2, \cdots, Q \leq N
\end{align*}
where $Q$ is the maximum period and $N$ is the length of the signal.
By combining the IPDFT and autocorrelation of a signal, the authors further proposed the autocorrelation-IPDFT method which was later used in \cite{Epps2011} for detecting periodic fragments in DNA sequence.
Although the IPDFT improves the ability of period estimation compared with the DFT, it only represents a subset of the signals with a special period rather than the whole exactly periodic subspace.
For example, Let $Q=N=6$, the frequency bins in the DFT are
$\{ 0,\frac{1}{6},\frac{2}{6},\frac{3}{6},\frac{4}{6},\frac{5}{6}\}$ and that in the IPDFT are $\{ \frac{1}{1},\frac{1}{2}, \frac{1}{3},\frac{1}{4},\frac{1}{5},\frac{1}{6}\}$.
It seems that the IPDFT can represent all the integer periods from $1$ to $6$,
but it is not the case.
The frequency bins contained in the exactly periodic subspace corresponding to the period $q=5$ are $\{\frac{1}{5},\frac{2}{5},\frac{3}{5},\frac{4}{5} \}$, but the IPDFT only represents the signals in this subspace with frequency $\{ \frac{1}{5} \}$.
Therefore, it is not possible for the IPDFT-based methods to accurately identify the periodic components in a signal.

In this paper, we derive the Ramanujan subspace pursuit (RSP) algorithm, which can overcome the drawbacks and limitations mentioned above, based on the Ramanujan subspace.
The RSP is a greedy iterative algorithm that can decompose any signal into a sum of exactly periodic components, by selecting and removing the most dominant periodic component from the residual signal in each iteration.
Unlike the PTs, the RSP is independent of the order and yields a unique decomposition of the signal.
To accurately select the most dominant periodic component in each iteration, a novel periodicity metric in the RSP algorithm is derived based on the energy of the periodic component obtained by orthogonally projecting the residual signal into the each Ramanujan subspace.
To reduce the computational cost of the RSP, we also propose the fast RSP based on the analysis of the relationship between the periodic subspace and the exactly periodic subspace and based on the MLE of the energy of the periodic component in the periodic subspace.
The fast RSP has a lower computational cost and can decompose a signal of length $N$ into the sum of $K$ exactly periodic components in $ \mathcal{O}(K N\log N)$.

The remainder of the paper is organized as follows.
In the next section, the concepts of exactly periodic signal and Ramanujan subspace are summarized.
Section \ref{SEC:OP} introduces the orthogonal projection of a signal with any length into the Ramanujan subspace.
The periodicity metric and the original RSP algorithm is proposed in Section \ref{SEC:RSP} and the fast RSP is introduced in Section \ref{SEC:FRSP}.
Moreover, the evaluations of the proposed RSP algorithm are performed in Section \ref{SEC:Evaluation}.
Finally, Section \ref{SEC:CON} concludes this paper.

The following notations are used throughout:

1) The quantity $\phi(q)$ is the \textit{Euler's totient function}.
It is equal to the number of integers $k \in [1,q]$ satisfying $(k,q)=1$.


2) $\mathcal{S}_q$ and $\mathcal{P}_q$ denote the Ramanujan subspace (or \textit{Exactly $q$-periodic}) subspace and \textit{$q$-periodic} subspace, respectively.
The $\text{dim}(\mathcal{S}_q)$ or $\text{dim}(\mathcal{P}_q)$ denotes the dimension of the subspace $\mathcal{S}_q$ or $\mathcal{P}_q$.

3) $\mathbf{x}_q$ and $\check{\mathbf{x}}_q$ denote the \textit{Exactly $q$-periodic} component and the \textit{$q$-periodic} component of a signal $\mathbf{x}$, respectively.

4) The symbols $\mathbb{Z}$ and $\mathbb{Z}_+$ stand for the set of integers and the set of nonnegative integers, respectively.

5) The uppercase bold letters ($\mathbf{A}$, $\mathbf{P}$, etc.) represent matrices.
 $\text{range}(\mathbf{P})$ denotes the range space of $\mathbf{P}$, which is spanned by the columns of $\mathbf{P}$.

6) The notation $q_i|q$ means that $q_i$ is a divisor of $q$. And $q_i \nmid q$ represents that $q_i$ is not a divisor of $q$.

\section{Exactly periodic signal and Ramanujan subspace}

This section gives the definitions of the periodic signal and the exactly periodic signal, based on which the concepts of the periodic subspace and the exactly periodic subspace are naturally defined.
The exactly periodic subspace can be achieved by constructing the Ramanujan subspace based on the Ramanujan sums.

\subsection{Periodic signal and exactly periodic signal}

The meaning of the periodicity of a signal seems to be easily understood, but there exists some ambiguity when we use the concept of the periodicity to characterize the temporal nature of the signal.
It is necessary to formally give the concept of the periodicity before the further discussion, especially the concept of the \textit{exactly periodic} signal which is first introduced in \cite{Muresan2003}.

\begin{definition}[$q$-periodic] \label{DEF:periodic}
    The signal $x(n)$ of length $N$ ($1 \leq n \leq N$) is \textit{$q$-periodic} if there exists an integer $q$ such that
    \begin{align} \label{EQ:Period}
    x(n+q)=x(n)
    \end{align}
    for all $n$ and $q$ is called the period of $x(n)$.
\end{definition}

However, the definition of the \textit{$q$-periodic} signal is ambiguous, since any \textit{$q$-periodic} signal is also \textit{$qM$-periodic} for $M \in \mathbb{Z}_+$.
For example, the periodic signal $\mathbf{1}$ (the vector of all ones) with the period $1$ can be considered as any period according to Definition \ref{DEF:periodic}.
To eliminate this ambiguity, a more definite definition about the periodicity is given as follows.

\begin{definition}[exactly $q$-periodic]\label{DEF:Eperiodic}
    The signal $x(n)$ of length $N$ ($1 \leq n \leq N$) is \textit{exactly $p$-periodic}, if and only if $q$ is the \textbf{smallest} integer in Eq. (\ref{EQ:Period}) for all $n$ and $q$ is called the \textit{exactly period} of $x(n)$.
\end{definition}

Although the concept of the \textit{exactly period} is first introduced in \cite{Muresan2003}, it has been implicitly or definitely used in \cite{Vaidyanathan2014_1,Vaidyanathan2014_2,Tenneti2015, Nakashizuka2008}.
According to the definition of the \textit{exactly $q$-periodic} signal, the periodic signal $\mathbf{1}$ with the period $1$ is just the \textit{exactly $1$-periodic} one without any ambiguity.
Obviously, the \textit{exactly period} is an important concept for characterizing the periodicity of the signal.

With the definitions about the periodicity of a signal, we can naturally define two subspaces.
One is referred to as the \textit{$q$-periodic subspace} denoted by $\mathcal{P}_q$, which is the set of all the \textit{$q$-periodic} signals.
The other is referred to as the \textit{exactly $q$-periodic subspace} denoted by $\mathcal{S}_q$, which is the set of all the \textit{exactly $q$-periodic} signals.
The framework proposed in this paper is mainly based on the \textit{exactly periodic subspace}.
Although the method for constructing the \textit{exactly periodic subspace} was proposed based on the subspace intersection in \cite{Muresan2003}, a more effective and direct way to generate the subspace is presented by constructing the Ramanujan subspace based on the Ramanujan sums.


\subsection{Ramanujan sums} \label{SUBSEC_RS}

For any fixed integer $q$, the Ramanujan sums is a sequence with period $q$ and is defined as follows
\begin{align}
	c_q(n)=\underset{(k,q)=1}{\sum_{k=1}^q} e^{j2 \pi kn/q}
\end{align}
where $(k,q)$ denotes the greatest common divisor of $k$ and $q$.
According to \cite{Vaidyanathan2014_1}, the Ramanujan sums has the following properties.

\textit{Property 1}.
If $q$ is a prime number, then
\begin{align}
	c_q(n)= \left\{
		\begin{array}{ll}
			q-1 & \text{if $n$ is multiple of $q$} \\
			-1  &  \text{otherwise}
		\end{array}
	\right.
\end{align}

\textit{Property 2}.
If $q=p^m$ for some prime $p$ and integer $m>1$ then
\begin{align}
	c_{p^m}(n) = \left \{
		\begin{array}{ll}
			0 & \text{ if $p^{m-1} \nmid n $} \\
			-p^{m-1} & \text{ if $p^{m-1}|n$ but $p^m \nmid n$} \\
			p^{m-1}(p-1) & \text{ if $p^m|n$}
		\end{array}
	\right.
\end{align}

\textit{Property 3}.
if $q_1$ and $q_2$ are coprime, i.e., $(q_1,q_2)=1$, then
\begin{align}
	c_{q_1 q_2}(n)=c_{q_1}(n)c_{q_2}(n)
\end{align}

Instead of calculating the Ramanujan sums with recursive computation in \cite{Vaidyanathan2014_1}, we give a more direct method based on the above three properties and the fundamental theorem of arithmetic \cite{Niven2008}.
The procedure of calculating the Ramanujan sums consists of the following steps.

Firstly, according to the fundamental theorem of arithmetic\cite{Niven2008},
any positive integer $q>1$ can be uniquely represented as a product of prime powers
\begin{align} \label{Eq:theo-ari}
	q=p_1^{\alpha_1}\cdots p_k^{\alpha_k}=\prod_{i=1}^k p_i^{\alpha_i}
\end{align}
where $p_1<p_2<\cdots<p_k$ are primes and $\alpha_i \in \mathbb{Z}_+$.

Secondly, the Ramanujan sums $c_{p_i^{\alpha_i}}(n)$ of the prime power $p_i^{\alpha_i}$ can be calculated by using the \textit{Property 1} and \textit{Property 2}.

Finally, the Ramanujan sums $c_q(n)$ of $q$ can be calculated from $\{ c_{p_i^{\alpha_i}}(n) \}_{i=1}^k$ by using the \textit{Property 3}.

With the above three steps, any Ramanujan sums $c_q(n)$ of the positive integer $q$ can be directly calculated.
For example, let $q=12$, which can be represented as
\begin{align*}
	q=2^2 \cdot 3
\end{align*}
The Ramanujan sums of $2^2$ and $3$ are
\begin{align*}
    \begin{array}{ll}
    c_{2^2}(12) &=2,0,-2,0,2,0,-2,0,2,0,-2,0 \\
    c_{3}(12)   &=2,-1,-1,2,-1,-1,2,-1,-1,2,-1,-1
    \end{array}
\end{align*}
Then the Ramanujan sums $c_{12}(12)$ can be obtained, that is
\begin{align*}
	c_{12}(12)=4,0,2,0,-2,0,-4,0,-2,0,2,0
\end{align*}

\subsection{Ramanujan subspace} \label{SEC:RS}

Based on the Ramanujan sums $c_q(n)$, we can directly construct the Ramanujan subspace $\mathcal{S}_q$ associated with the period $q$.
The Ramanujan subspace $\mathcal{S}_q$ is just the \textit{exactly $q$-periodic} subspace that will be used to decompose a signal according to its periodic structure in this paper.

Given the period $q \in \mathbb{Z}+$, let $c_q(n)$ be the Ramanujan sums and $\mathbf{B}_q$ be its $q \times q$ integer circulant matrix defined by
\begin{align}
        \small
	\mathbf{B}_q = \left[
		\begin{array}{cccc}
			c_q(0) & c_q(q-1)&  \cdots& c_q(1) \\
			c_q(1) & c_q(0)&  \cdots& c_q(2) \\
			c_q(2) & c_q(1)&  \cdots& c_q(3) \\
            \vdots & \vdots &  \ddots & \vdots \\
			c_q(q-2) & c_q(q-3)&  \cdots& c_q(q-1) \\
			c_q(q-1) & c_q(q-2)&  \cdots& c_q(0)
		\end{array}
	\right]
\end{align}
Thus, the Ramanujan subspace $\mathcal{S}_q \subset \mathcal{R}^q$ can be spanned by the columns of $\mathbf{B}_q$, i.e., $\mathcal{S}_q=\text{Range}(\mathbf{B}_q)$.

According to Theorem 3 and 4 in \cite{Vaidyanathan2014_1}, we know that both the column rank of $\mathbf{B}_q$ and the dimension of the subspace $\mathcal{S}_q$ are equal to $\phi(q)$, i.e., $\text{rank}(\mathbf{B}_q)=\text{dim}(\mathcal{S}_q)=\phi(q)$.
More importantly, according to Theorem 10 in \cite{Vaidyanathan2014_1}, the Ramanujan subspace $\mathcal{S}_q$ is the \textit{exactly $q$-periodic subspace}.
Let $\mathcal{S}_q^{\bot}$ denote the orthogonal complement of $\mathcal{S}_q$.
The orthogonal projection operator (or projection matrix) $\mathbf{P}_q \in \mathcal{R}^{q \times q}$ along $\mathcal{S}_q^{\bot}$ into $\mathcal{S}_q$ is given by
\begin{align}\label{EQ:ROProjector}
	\mathbf{P}_q = \frac{\mathbf{B}_q}{q}
\end{align}
which satisfies that $\mathbf{P}_q^2 = \mathbf{P}_q$ and $\mathbf{P}_q^T = \mathbf{P}_q$.
The Ramanujan subspace $\mathcal{S}_q$ is also determined by the range space of $\mathbf{P}_q$, i.e., $\mathcal{S}_q = \text{Range}(\mathbf{P}_q)$.

The Ramanujan subspace is an important concept used in this paper, based on which the whole signal space is decomposed into a set of subspaces.
Before doing this, it is required to solve the problem of `how to orthogonally project a signal of any length into the Ramanujan subspace?' in the following section.

\section{Orthogonal projection into $\mathcal{S}_q$}
\label{SEC:OP}

In this section, we focus on the problem of orthogonally projecting a signal $\mathbf{x}$ of any length $N$ into the Ramanujan subspace $\mathcal{S}_q$.
Although a method for calculating the projection is given in \cite{Vaidyanathan2014_2} when $q$ is the divisor of the length $N$, it is not suitable for the signal of any length, i.e., $q$ is not the divisor of the length $N$.
To solve the problem, we define the orthogonal projection operator into $\mathcal{S}_q$ as follows
\begin{align}
	\text{Proj}: \mathcal{R}^N \mapsto \mathcal{S}_q
\end{align}
Then the orthogonal projection $\mathbf{x}_q$ of $\mathbf{x}$ into $\mathcal{S}_q$ can be represented as $\mathbf{x}_q=\text{Proj}(\mathbf{x},\mathcal{S}_q)$.
To implement the projection operator,
we first present a method of projecting a signal of length $N$ equal to the integer multiple of the period $q$ into $\mathcal{S}_q$, and then extend the method to the general case.

\subsection{Projection of signal with length of multiple periods}

Given the signal $\mathbf{x}\in \mathcal{R}^N$, let $N$ be equal to the integer multiple of the period $q$ such that $N=qM$ where $M \in \mathbb{Z}_+$.
The Ramanujan subspace $\mathcal{\widetilde{S}}_{q} \subset \mathcal{R}^N$ can be spanned by the columns of the following matrix
\begin{align}
	\mathbf{C}_{qM} = \frac{1}{\sqrt{M}} \left[
		\begin{array}{c}
			\mathbf{P}_q \\
			\mathbf{P}_q \\
            \vdots \\
			\mathbf{P}_q
		\end{array}
	\right] \in \mathcal{R}^{qM \times q}
\end{align}
which is the $M$ repetitions of the projection matrix $\mathbf{P}_q$ of $\mathcal{S}_q$ defined in Eq. (\ref{EQ:ROProjector}),
i.e., $\mathcal{\widetilde{S}}_{q}=\text{Range}(\mathbf{C}_{qM})$.
The subspace $\mathcal{\widetilde{S}}_{q}$ has the same dimension as $\mathcal{S}_q$, i.e., $\text{dim}(\mathcal{\widetilde{S}}_{q})=\text{dim}(\mathcal{S}_{q})=\phi(q)$.
The orthogonal projection matrix $\mathbf{\widetilde{P}}_{q}$ along the orthogonal complement subspace $\mathcal{\widetilde{S}}_{q}^\perp$ into the subspace $\mathcal{\widetilde{S}}_{q}$ is given by
\begin{align}\label{EQ:PMor}
	\mathbf{\widetilde{P}}_{q} &= \mathbf{C}_{qM}(\mathbf{C}_{qM}^T\mathbf{C}_{qM})^{-1}\mathbf{C}_{qM}^T \\
    &=
        \frac{1}{M}\left [
            \begin{array}{ccc}
              \mathbf{P}_q & \cdots & \mathbf{P}_q \\
              \vdots & \ddots & \vdots \\
              \mathbf{P}_q & \cdots & \mathbf{P}_q
              \label{Eq:PP}
            \end{array}
        \right]
\end{align}
which is a $qM \times qM$ matrix and satisfies that $\mathbf{\widetilde{P}}_{q}^2 = \mathbf{\widetilde{P}}_{q}$ and $\mathbf{\widetilde{P}}_{q}^T = \mathbf{\widetilde{P}}_{q}$.
Since the subspace $\mathcal{\widetilde{S}}_q \subset \mathcal{R}^N$ is completely determined by $\mathcal{S}_q \subset \mathcal{R}^q$, it is also denoted by $\mathcal{S}_q$ to avoid a multitude of notations as done in \cite{Vaidyanathan2014_1}.
Then, the orthogonal projection $\mathbf{x}_q$ of $\mathbf{x}$ into $\mathcal{{S}}_{q}$ can be obtained by
\begin{align} \label{EQ:ProjMul}
    \mathbf{x}_q = \text{Proj}(\mathbf{x},\mathcal{S}_q) = \mathbf{\widetilde{P}}_q \mathbf{x}, \text{  } \forall \mathbf{x} \in \mathcal{R}^N
\end{align}

\subsection{Projecting signal with any length}

Instead of modifying the project matrix as done in Eq. (\ref{Eq:PP}), we give a more effective method to calculate the projection by partitioning the $\mathbf{x}$ into a set of blocks and by calculating the projection of the mean of these blocks into $\mathcal{S}_q$.
Moreover, the method of calculating mean projection of blocks can be easily applied to the signal of any length.

\begin{theorem} \label{THE:INTEST}
Let the signal $\mathbf{x} \in \mathcal{R}^N$ where $N=qM$ be partioned into $M$ blocks
\begin{align} \label{Eq:blocks}
	\mathbf{x}=\left[
		\begin{array}{c}
			\mathbf{x}^{(1)} \\
			\mathbf{x}^{(2)} \\
            \vdots  \\
			\mathbf{x}^{(M)}
		\end{array}
	\right]
\end{align}
where $\mathbf{x}^{(m)} \in \mathcal{R}^q$ is the $m$-th block, $1 \leq m \leq M$.
Let $\mathbf{s} \in \mathcal{R}^q$ be the mean of the $M$ blocks, defined by
\begin{align}
    \mathbf{s}=\frac{1}{M} \sum_{m=1}^M \mathbf{x}^{(m)}
\end{align}
Let $\mathbf{\bar{x}}_q \in \mathcal{S}_q$ be the projection of $\mathbf{s}$ into the Ramanujan subspace $\mathcal{S}_q$, which obtained by $\mathbf{\bar{x}}_q=\mathbf{P}_q \mathbf{s}$ where $\mathbf{P}_q$ is the orthogonal projection matrix of $\mathcal{S}_q$ defined in Eq. (\ref{EQ:ROProjector}).
Then, the orthogonal projection $\mathbf{x}_q$ of $\mathbf{x}$ into the Ramanujan subspace $\mathcal{S}_q$ can be obtained by repeating $M$ times of $\mathbf{\bar{x}}_q$ as follows
\begin{align} \label{EQ:ProjM}
	\mathbf{x}_q = \text{Proj}(\mathbf{x}, \mathcal{S}_q) = \left[
		\begin{array}{c}
			\mathbf{\bar{x}}_q \\
			\mathbf{\bar{x}}_q \\
            \vdots  \\
			\mathbf{\bar{x}}_q
		\end{array}
	\right]
\end{align}
\end{theorem}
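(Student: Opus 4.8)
The plan is to verify the claimed identity (\ref{EQ:ProjM}) by a direct block computation, exploiting the explicit block form of the orthogonal projector $\mathbf{\widetilde{P}}_q$ already obtained in (\ref{Eq:PP}) together with the fact, recorded in (\ref{EQ:ProjMul}), that the projection is $\mathbf{x}_q = \mathbf{\widetilde{P}}_q\mathbf{x}$. First I would regard $\mathbf{\widetilde{P}}_q$ as the $M\times M$ array of $q\times q$ blocks in which every block equals $\frac{1}{M}\mathbf{P}_q$, and I would partition $\mathbf{x}$ into the blocks $\mathbf{x}^{(1)},\dots,\mathbf{x}^{(M)}$ exactly as in (\ref{Eq:blocks}). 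The product $\mathbf{\widetilde{P}}_q\mathbf{x}$ then decouples block-row by block-row, and the $i$-th block becomes a sum of the same operator $\frac{1}{M}\mathbf{P}_q$ applied to each $\mathbf{x}^{(m)}$.

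Carrying this out, for every $i$ with $1\le i\le M$ the $i$-th block of $\mathbf{x}_q$ is
\begin{align*}
(\mathbf{x}_q)^{(i)} = \sum_{m=1}^{M}\frac{1}{M}\mathbf{P}_q\mathbf{x}^{(m)} = \mathbf{P}_q\Big(\frac{1}{M}\sum_{m=1}^{M}\mathbf{x}^{(m)}\Big) = \mathbf{P}_q\mathbf{s} = \mathbf{\bar{x}}_q,
\end{align*}
where the third equality uses the definition of $\mathbf{s}$ and the last uses $\mathbf{\bar{x}}_q=\mathbf{P}_q\mathbf{s}$. Since this value is independent of $i$, all $M$ blocks of $\mathbf{x}_q$ coincide with $\mathbf{\bar{x}}_q$, which is precisely the stacked form asserted in (\ref{EQ:ProjM}). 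Because $\mathbf{\widetilde{P}}_q$ is already established in (\ref{EQ:PMor}) to be the orthogonal projector onto $\mathcal{\widetilde{S}}_q$, nothing further is needed to conclude that the resulting vector is the orthogonal projection of $\mathbf{x}$.

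As a route that avoids appealing to the explicit inverse in (\ref{EQ:PMor}), I would instead verify the two defining properties of the orthogonal projection directly: that the candidate $\mathbf{x}_q$ lies in $\mathcal{\widetilde{S}}_q$, and that the residual $\mathbf{x}-\mathbf{x}_q$ is orthogonal to $\mathcal{\widetilde{S}}_q$. The first holds because $\mathcal{\widetilde{S}}_q=\text{Range}(\mathbf{C}_{qM})$ consists exactly of the $M$-fold repetitions of vectors in $\mathcal{S}_q=\text{Range}(\mathbf{P}_q)$, and $\mathbf{\bar{x}}_q=\mathbf{P}_q\mathbf{s}\in\mathcal{S}_q$. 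For the second, letting $\mathbf{p}_j$ be the $j$-th column of $\mathbf{P}_q$, the inner product of the corresponding column of $\mathbf{C}_{qM}$ with the residual collapses, after summing over the $M$ blocks, to a multiple of $\mathbf{p}_j^{T}(\mathbf{I}-\mathbf{P}_q)\mathbf{s}$, which vanishes since $(\mathbf{I}-\mathbf{P}_q)\mathbf{s}\in\mathcal{S}_q^{\bot}$ while $\mathbf{p}_j\in\mathcal{S}_q$. The computation itself is routine; the only point that requires care — and the conceptual heart of the statement — is the identification of $\mathcal{\widetilde{S}}_q$ with the set of repeated blocks drawn from $\mathcal{S}_q$, so that projecting the full-length signal reduces to projecting the single averaged block $\mathbf{s}$.
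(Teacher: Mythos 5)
Your primary argument---expanding $\mathbf{x}_q=\mathbf{\widetilde{P}}_q\mathbf{x}$ block-row by block-row using the block form in Eq.~(\ref{Eq:PP}) and recognizing each resulting block as $\frac{1}{M}\mathbf{P}_q\sum_{m}\mathbf{x}^{(m)}=\mathbf{P}_q\mathbf{s}=\mathbf{\bar{x}}_q$---is correct and is essentially identical to the paper's own proof. Your secondary route (checking membership in $\mathcal{\widetilde{S}}_q$ and orthogonality of the residual against the columns of $\mathbf{C}_{qM}$) is also sound, but it is an optional addition rather than a different proof of record, since the paper already establishes in Eq.~(\ref{EQ:PMor}) that $\mathbf{\widetilde{P}}_q$ is the orthogonal projector.
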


\begin{proof}
The orthogonal projection $\mathbf{x}_q$ of $\mathbf{x}$ into the subspace $\mathcal{S}_{q}$ is given by
\begin{align*}
	\mathbf{x}_{q}&=\mathbf{\widetilde{P}}_{q}\mathbf{x} \\
        &=\frac{1}{M}\left[
		\begin{array}{ccc}
			\mathbf{P}_q & \cdots & \mathbf{P}_q \\
            \mathbf{P}_q & \cdots & \mathbf{P}_q \\
            \vdots       &  \ddots  & \vdots \\
			\mathbf{P}_q & \cdots & \mathbf{P}_q
		\end{array}
	\right]
	\left[
		\begin{array}{c}
			\mathbf{x}^{(1)} \\
			\mathbf{x}^{(2)} \\
            \vdots  \\
			\mathbf{x}^{(M)}
		\end{array}
	\right]  \\
    &=\frac{1}{M}\left[
		\begin{array}{c}
			\mathbf{P}_q \sum_{k=1}^M \mathbf{x}^{(k)} \\
			\mathbf{P}_q \sum_{k=1}^M \mathbf{x}^{(k)} \\
            \vdots  \\
			\mathbf{P}_q \sum_{k=1}^M \mathbf{x}^{(k)}
		\end{array}
	\right] \\
    &=\left[
		\begin{array}{c}
			\mathbf{\bar{x}}_q\\
			\mathbf{\bar{x}}_q\\
            \vdots  \\
			\mathbf{\bar{x}}_q
		\end{array}
	\right]
\end{align*}
where $\mathbf{\bar{x}}_q=\mathbf{P}_q \mathbf{s}$ and $\mathbf{s}=\frac{1}{M} \sum_{k=1}^M \mathbf{x}^{(k)}$.

\end{proof}

The theorem \ref{THE:INTEST} reveals that the orthogonal projection $\mathbf{x}_q$ of $\mathbf{x}$ into $\mathcal{S}_q$ is only depended on the mean $\mathbf{s}$ of the blocks from $\mathbf{x}$, which provides a way to calculate the orthogonal projection $\mathbf{x}_q$ of $\mathbf{x}$ of any length.
Consider the general case where the length $N$ of the signal $\mathbf{x}$ is not the integer multiple of the period $q$.
Let $M=\lceil {N} / {q} \rceil$, where $\lceil z  \rceil$ is the smallest integer greater than or equal to $z$.
The signal $\mathbf{x}$ is padded with $\tilde{N}$ zeros to obtain a new signal $\tilde{\mathbf{x}}$ of length $qM$,
where $\tilde{N}=qM-N$.
The signal $\tilde{\mathbf{x}}$ is partitioned into $M$ blocks $\{ \mathbf{x}^{(m)} \}_{m=1}^M$, where $\mathbf{x}^{(m)} \in \mathcal{R}^q$ is the $m$-th block.
The mean of these blocks can be calculated by
\begin{align}
    \mathbf{s} = \sum_{m=1}^M \mathbf{W} \mathbf{x}^{(m)}
\end{align}
where $\mathbf{W}$ is a diagonal matrix defined by
\begin{align}
\mathbf{W}=\text{diag}(\overset{q-\tilde{N}}{\overbrace{\frac{1}{M},\cdots,\frac{1}{M}}},
    \overset{\tilde{N}}{\overbrace{\frac{1}{M-1},\cdots,\frac{1}{M-1}}} )
\end{align}
Thus, the orthogonal projection $\tilde{\mathbf{x}}_q$ of the signal $\tilde{\mathbf{x}}$ into $\mathcal{S}_q$ can be calculated with theorem \ref{THE:INTEST}.
Finally, the orthogonal projection $\mathbf{x}_q$ of $\mathbf{x}$ into $\mathcal{S}_q$ can be obtained by simply preserving the first $N$ elements.

\section{Ramanujan subspace pursuit}
\label{SEC:RSP}

A signal can be decomposed into a sum of elementary building blocks so that it can be represented in a more meaningful way.
For example, the Fourier and wavelet transform can decompose a signal into a sum of frequency components; the atom decomposition can decompose a signal into a sum of time-frequency blocks or atom components.
Similarly, our purpose is to decompose a signal into a sum of exactly periodic components to obtain its {exactly periodic} representation.
The periodic decomposition can be modeled as
\begin{align} \label{EQ:PERIODMODE}
    \mathbf{x} = \sum_{k=1}^K \mathbf{x}_{q_k} + \mathbf{r}
\end{align}
where $\mathbf{x}_{q_k}=\text{Proj}(\mathbf{x},\mathcal{S}_{q_k})$ is the \textit{exactly $q_k$-periodic} component and $\mathbf{r}$ is the residual component.
Note that, if $\mathbf{r}$ is allowed for $\|\mathbf{r}\|>0$, Eq. (\ref{EQ:PERIODMODE}) is the problem of the signal approximation, while for $\|\mathbf{r}\|=0$, Eq. (\ref{EQ:PERIODMODE}) is the problem of the signal representation.
If the periodic decomposition can accurately indicate the periodic components in a signal, then it is useful for periodic estimation, periodic analysis, and other applications in signal processing, since it reveals the intrinsic periodic structure of the signal.

To obtain the unique periodic representation, some strategy or constraint is required, such as greedy strategy as done in Matching Pursuit (MP) \cite{Mallat1993} or other methods of subspace pursuit (SP).
The greedy strategy for the periodic decomposition is performed by the following procedures: finding the most dominant periodic components and removing it from the current signal to obtain the residual signal, and then repeating the same work for the residual signal at the next iteration until a stopping criterion is met.
Compared with the methods of MP or SP, the major challenge in the periodic decomposition is how to select the dominant periodic component in each iteration, due to the definition of the dominant periodic component in a signal is not clear yet.

Given the maximum expected period $Q$, a set of Ramanujan subspaces $\{\mathcal{S}_{q}\}_{q=1}^{Q}$ are generated according to the period $q$ ranging from 1 up to $Q$ and are used to capture the exactly periodic components in the signal.
Note that, since the dimension of all these subspaces, $\sum_{q=1}^Q \phi(q)$, is far greater than the length $N$ of the signal, i.e., $\sum_{q=1}^Q \phi(q) \gg N$, these Ramanujan subspaces are not orthogonal to each other (only when $N=lcm(1,\cdots,Q)$, these Ramanujan subspaces are orthogonal to each other).
However, the most dominant periodic component cannot be simply determined in terms of the value of the energy of the component captured by the subspace, as done in MP or SP.
The reason is that the dimension $\phi(q)$  of the Ramanujan subspace is directly related to the period $q \in [1, Q]$ and the higher dimensional subspace tends to capture more energy from the signal.
Therefore, it is necessary to first define a suitable metric that can be used to select the dominant periodic component at each iteration.

\subsection{Periodicity metric}

The periodicity metric of the exactly periodic component is first defined based on the autocorrelation function.
Then the relationship of the periodicity metric with the energy of the exactly periodic component and the period is revealed.

\begin{definition}[Periodicity metric] \label{DEF:periodicity_cp}
    Given a signal $\mathbf{x} \in \mathcal{R}^N$ and the maximum period $Q$, there are $Q$ exactly periodic components $\{\mathbf{x}_q = \text{Proj}(\mathbf{x},\mathcal{S}_q) \}_{q=1}^Q$ from which the most dominant periodic component will be selected.
    The periodicity metric of the exactly periodic component $\mathbf{x}_q$ is defined by
    \begin{align}\label{Eq:PTMeasure}
        P(\mathbf{x}_q,q) \triangleq \sum_{l=0}^{M-1} \varphi_{\mathbf{x}_q}(lq)
    \end{align}
    where $M=\lceil \frac{N}{q} \rceil$ and $\varphi_{\mathbf{x}_q}(\cdot)$ is the the autocorrelation function (ACF) of $\mathbf{x}_q$ is defined by
    \begin{align}\label{EQ:ACF}
        \varphi_{\mathbf{x}_q}(k)= \sum_{j=1}^{N-k} \mathbf{x}_q(j) \mathbf{x}_q(j+k)
    \end{align}
\end{definition}

The periodicity metric $P(\mathbf{x}_q,q)$ simultaneously measures the energy $\|\mathbf{x}_q\|^2=\varphi_{\mathbf{x}_q}(0)$ of the exactly periodic component $\mathbf{x}_q$ for $l=0$ and the periodic energies $\sum_{l=1}^{M-1} \varphi_{\mathbf{x}_q}(lq)$ contributed from its periodicity for $l>0$.
Although the definition of the periodicity metric of $\mathbf{x}_q$ depends on its ACF, the following theorem reveals that the periodicity metric can be calculated based on the energy $\|\mathbf{x}_q\|^2$ of $\mathbf{x}_q$, which can be obtained by projecting the signal $\mathbf{x}$ into the Ramanujan subspace $\mathcal{S}_q$.

\begin{theorem}
    The periodicity metric of $\mathbf{x}_q$ of length $N$ can be directly calculated from the energy of $\mathbf{x}_q$ as follows
    \begin{align} \label{EQ:PMxq}
      	P(\mathbf{x}_q,q)=\frac{N+q}{2q} \|\mathbf{x}_q\|^2
    \end{align}
\end{theorem}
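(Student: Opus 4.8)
The plan is to exploit the structural description of $\mathbf{x}_q$ provided by Theorem~\ref{THE:INTEST}: the orthogonal projection into $\mathcal{S}_q$ is $q$-periodic, consisting of $M$ identical length-$q$ blocks $\mathbf{\bar{x}}_q$, so that $\|\mathbf{x}_q\|^2 = M\|\mathbf{\bar{x}}_q\|^2$. This exact periodicity is precisely what collapses the autocorrelation at lags that are multiples of $q$ into a sum of squares, and that is the mechanism I would build the whole argument around.

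First I would write $\varphi_{\mathbf{x}_q}(lq)$ from its definition in Eq.~(\ref{EQ:ACF}) and use the $q$-periodicity $\mathbf{x}_q(j+lq)=\mathbf{x}_q(j)$ to replace the product $\mathbf{x}_q(j)\mathbf{x}_q(j+lq)$ by the square $\mathbf{x}_q(j)^2$, turning the lagged autocorrelation into a truncated energy,
\[
  \varphi_{\mathbf{x}_q}(lq)=\sum_{j=1}^{N-lq}\mathbf{x}_q(j)^2 .
\]
Next, since $N-lq=q(M-l)$ spans exactly $M-l$ complete copies of the block $\mathbf{\bar{x}}_q$, each of energy $\|\mathbf{\bar{x}}_q\|^2=\|\mathbf{x}_q\|^2/M$, I would read off
\[
  \varphi_{\mathbf{x}_q}(lq)=\frac{M-l}{M}\,\|\mathbf{x}_q\|^2 .
\]
Summing over $l=0,\dots,M-1$ then reduces to the arithmetic series $\sum_{l=0}^{M-1}(M-l)=\tfrac{1}{2}M(M+1)$, which gives $P(\mathbf{x}_q,q)=\tfrac{1}{2}(M+1)\|\mathbf{x}_q\|^2$; substituting $M=N/q$ yields the claimed factor $\tfrac{N+q}{2q}$.

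The step I expect to be the main obstacle, or at least the one demanding the most care, is the pair of identities $\mathbf{x}_q(j+lq)=\mathbf{x}_q(j)$ and $N-lq=q(M-l)$: both are exact only when $N$ is an integer multiple of $q$, in which case Theorem~\ref{THE:INTEST} delivers exactly $M$ full blocks and the block count telescopes cleanly. For a general length $N$, truncation to the first $N$ samples leaves a partial final block, so the last block contributes a fractional energy rather than a full $\|\mathbf{\bar{x}}_q\|^2$ and the sum no longer collapses into a pure arithmetic series. I would therefore either establish the identity in the exact-multiple case $N=qM$, where the projection is a clean stack of $M$ copies of $\mathbf{\bar{x}}_q$, or else carry the partial-block energy explicitly through the bookkeeping; the compact closed form $\tfrac{N+q}{2q}\|\mathbf{x}_q\|^2$ is the one that emerges in the exact-multiple setting.
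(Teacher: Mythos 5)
Your proposal is correct, and it rests on the same foundation as the paper's own proof: both exploit the fact (from Theorem~\ref{THE:INTEST}) that the projection $\mathbf{x}_q$ is a stack of $M$ identical length-$q$ blocks, so that autocorrelation at lags $lq$ collapses into block energies. The bookkeeping differs slightly. You compute each lag value in closed form, $\varphi_{\mathbf{x}_q}(lq)=\frac{M-l}{M}\|\mathbf{x}_q\|^2$, and then sum the arithmetic series $\sum_{l=0}^{M-1}(M-l)=\frac{1}{2}M(M+1)$. The paper never computes an individual lag: it writes $\|\mathbf{x}_q\|^2=\frac{1}{M}\bigl\|\sum_{m=1}^{M}\mathbf{x}_q^{(m)}\bigr\|^2$ (valid because all blocks are equal), expands the square, recognizes the aggregate cross terms as $2\sum_{l=1}^{M-1}\varphi_{\mathbf{x}_q}(lq)$, and solves that single identity for the ACF sum. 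The two derivations are algebraically equivalent; yours is the more transparent per-lag computation, the paper's the more compact aggregate one. One point where your write-up is actually more careful than the paper: you explicitly flag that the identities $\mathbf{x}_q(j+lq)=\mathbf{x}_q(j)$ over the full truncated range and $N-lq=q(M-l)$ require $q\mid N$, and that a partial final block breaks the clean collapse. The paper silently has the same restriction --- its proof sets $M=\lceil N/q\rceil$ but the final substitution $M=N/q$ is exact only in the divisible case --- so your caveat identifies a gap in the paper's own statement rather than in your argument.
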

Moreover, if $N \gg q$, then the equivalent metric can be approximated by
\begin{align} \label{EQ:PMxqa}
	P(\mathbf{x}_q,q) \approx \frac{1}{2q} \|\mathbf{{x}}_q\|^2
\end{align}

\begin{proof}

    Let $\mathbf{x}_q = \text{Proj}(\mathbf{x},\mathcal{S}_q)$ where $\mathbf{x}\in \mathcal{R}^N$,
    then $\mathbf{x}_q$ is partitioned into $M$ blocks
    \begin{align*}
        \mathbf{x}_q = \left[
            \begin{array}{c}
                \mathbf{x}_q^{(1)} \\
                \mathbf{x}_q^{(2)} \\
                \vdots  \\
                \mathbf{x}_q^{(M)}
            \end{array}
        \right]
    \end{align*}
    where $M=\lceil \frac{N}{q} \rceil$.
    The energy $\|\mathbf{x}_q\|^2$ of $\mathbf{x}_q$ can be represented as
    \begin{footnotesize}
    \begin{align*}
	\|\mathbf{x}_q\|^2 &=M \left\| \frac{1}{M} \sum_{m=1}^M \mathbf{x}_q^{(m)} \right\|^2
            = \frac{1}{M} \left\| \sum_{m=1}^M \mathbf{x}_q^{(m)} \right\|^2    \\
            &= \frac{1}{M} \left \|
                \left [
                    \begin{array}{cccccc}
                        \mathbf{x}_q(1)&+&\cdots&+&\mathbf{x}_q(1+(M-1)q) \\
                        \mathbf{x}_q(2)&+&\cdots&+&\mathbf{x}_q(2+(M-1)q) \\
                        \vdots & \vdots & \cdots & \vdots & \vdots \\
                        \mathbf{x}_q(q)&+&\cdots&+&\mathbf{x}_q(Mq)
                    \end{array}
                \right]
                \right\|^2  \\
                    &= \frac{1}{M} \left(   \|\mathbf{x}_q\|^2 + 2\sum_{l=1}^{M-1}\varphi_{\mathbf{x}_q}(lq)  \right)
    \end{align*}
    \end{footnotesize}
    where $\varphi_{\mathbf{x}_q}(\cdot)$ is the ACF defined in Eq. (\ref{EQ:ACF}).
    Thus, we can obtain that
    \begin{align*}
    	P(\mathbf{x}_q,q)= \sum_{l=0}^{M-1}\varphi_{\mathbf{x}_q}(lq) = \frac{M+1}{2} \|\mathbf{x}_q\|^2
    \end{align*}
    And, hence
    \begin{align*}
        P(\mathbf{x}_q,q)= \frac{N+q}{2q} \|\mathbf{x}_q\|^2
    \end{align*}

    Note that, $P(\mathbf{x}_q,q)$ and $P(\mathbf{x}_q,q)/N$ are completely equivalent when they are used to select the dominant periodic component from the periodic components $\{\mathbf{x}_q\}_{q=1}^Q$, where $Q$ is the maximum period.
    Hence, the periodicity metric $P(\mathbf{x}_q,q)$ is equivalent to
    \begin{align*}
        P(\mathbf{x}_q,q) = \frac{1+N/q}{2q} \|\mathbf{{x}}_q\|^2
    \end{align*}
    Moreover, if $N \gg q$, then the exactly periodicity metric can be approximated by
    \begin{align*}
    	P(\mathbf{x}_q,q) \approx \frac{1}{2q} \|\mathbf{{x}}_q\|^2
    \end{align*}

\end{proof}

\subsection{Ramanujan subspace pursuit}

The Ramanujan subspace pursuit (RSP) is iteratively performed by approximating a signal $\mathbf{x}$ with its dominant exactly periodic components and by calculating its residual.
Let $\mathbf{x}$ be a signal of length $N$, $Q$ be the maximum period, and $\Phi(Q)=\sum_{q=1}^Q \phi(q)$.
To represent the signal $\mathbf{x}$ with its exactly periodic components, it requires that $\Phi(Q)\geq N$ so that the sum of the Ramanujan subspaces $\{ \mathcal{S}_{q} \}_{q=1}^Q$ forms the complete space $\mathcal{R}^N$, i.e., $\mathcal{S}_1+\cdots+\mathcal{S}_Q=\mathcal{R}^N$.
We call $\Phi(Q)\geq N$ the condition of the periodic representation of the RSP.
Then, the RSP is performed as follows.

Let the initial residual $\mathbf{r}^0=\mathbf{x}$.
In iteration $k$, given the maximum period $Q$ and $\Gamma = \{1,\cdots,Q\}$, the residual $\mathbf{r}^{k}$ is calculated by
\begin{align} \label{EQ:RSResidue}
	\mathbf{r}^{k}=\mathbf{r}^{k-1}-\mathbf{x}_{q_k}
\end{align}
where $\mathbf{x}_{q_k}=\text{Proj}(\mathbf{r}^{k-1},\mathcal{S}_{q_k})$.
The period $q_k$ is selected so that $\mathbf{x}_{q_k}$ has the maximum value of the periodicity metric among all the periods from $1$ to $Q$, that is
\begin{align}
q_k=\underset{q \in \Gamma}{\text{argmax}} \left\{ P\left(\mathbf{x}_{q}^{(k)},q \right) \right\}
\end{align}
where $\mathbf{x}_q^{(k)}=\text{Proj}(\mathbf{r}^{k-1},\mathcal{S}_{q})$.
Since the projection $\mathbf{x}_{q_k}$ is orthogonal to $\mathbf{r}^{k-1}$ in Eq. (\ref{EQ:RSResidue}), it implies that
\begin{align}\label{EQ:RSResidueEnergy}
	\|\mathbf{r}^{k}\|^2 = \| \mathbf{r}^{k-1} \|^2-\|\mathbf{x}_{q_k}\|^2
\end{align}
Summing Eq. (\ref{EQ:RSResidue}) over $k$ between $1$ and $K$ yields
\begin{align}
	{\mathbf{x}} = \sum_{k=1}^K \mathbf{x}_{q_k} +\mathbf{r}^{K}
\end{align}
Similarly, summing Eq. (\ref{EQ:RSResidueEnergy}) over $k$ between $1$ and $K$ gives
\begin{align}
	\|\mathbf{x}\|^2 = \sum_{k=1}^K\|\mathbf{x}_{q_k}\|^2 + \|\mathbf{r}^K\|^2
\end{align}

The following theorem proves that the RSP converges exponentially and the norm of the residual in the RSP converges to zero.

\begin{theorem}
    Let $\mathbf{x}$ be a signal of length $N$ and $Q$ be the maximum period satisfying $\Phi(Q)>N$, where $\Phi(Q)= \underset{q \in \Gamma}{\sum} \phi(q)$ and $\Gamma = \{1,\cdots, Q\}$.
    The residual $\mathbf{r}^{k}$ computed by the Ramanujan subspace pursuit in the iteration $k$ satisfies
    \begin{align}
        \|\mathbf{r}^{k}\|^2 \leq (1-\mu(\Gamma))^{k}\|\mathbf{x}\|^2
    \end{align}
    where $\mu(\Gamma)$ is a constant number and satisfies $\mu(\Gamma) \in (0,1]$.
    As a consequence,
    \begin{align} \label{EQ:DEINF}
    	\mathbf{x}=\sum_{k=1}^{+\infty} \mathbf{x}_{q_k}
    \end{align}
    and
    \begin{align}\label{EQ:ENINF}
    	\|\mathbf{x}\|^2=\sum_{k=1}^{+\infty} \|\mathbf{x}_{q_k}\|^2
    \end{align}
\end{theorem}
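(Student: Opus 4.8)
The plan is to follow the standard convergence template for greedy pursuit algorithms: I would prove that each iteration strips away at least a fixed positive fraction of the residual energy, forcing the residual norm to decay geometrically. Recalling from Eq.~(\ref{EQ:RSResidueEnergy}) that $\|\mathbf{r}^{k}\|^2=\|\mathbf{r}^{k-1}\|^2-\|\mathbf{x}_{q_k}\|^2$, the whole statement reduces to a single per-step inequality
\begin{align*}
\|\mathbf{x}_{q_k}\|^2 \ge \mu(\Gamma)\,\|\mathbf{r}^{k-1}\|^2
\end{align*}
for some constant $\mu(\Gamma)\in(0,1]$ that does not depend on $k$. Granting this, the energy recursion becomes $\|\mathbf{r}^{k}\|^2\le(1-\mu(\Gamma))\|\mathbf{r}^{k-1}\|^2$, and an immediate induction from $\mathbf{r}^0=\mathbf{x}$ yields the claimed bound $\|\mathbf{r}^{k}\|^2\le(1-\mu(\Gamma))^{k}\|\mathbf{x}\|^2$.

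The heart of the argument, and the step I expect to be the main obstacle, is manufacturing the uniform constant $\mu(\Gamma)$. First I would invoke the representation condition $\Phi(Q)>N$, which by the discussion in Section~\ref{SEC:RSP} guarantees $\mathcal{S}_1+\cdots+\mathcal{S}_Q=\mathcal{R}^N$, so that no nonzero vector can be orthogonal to every $\mathcal{S}_q$; equivalently the function $f(\mathbf{u})=\max_{q\in\Gamma}\|\text{Proj}(\mathbf{u},\mathcal{S}_q)\|^2$ is strictly positive on the unit sphere of $\mathcal{R}^N$. Since each map $\mathbf{u}\mapsto\|\text{Proj}(\mathbf{u},\mathcal{S}_q)\|^2$ is a continuous quadratic form and the maximum runs over the finite set $\Gamma$, $f$ is continuous; by compactness of the sphere it attains a minimum $\delta=\min_{\|\mathbf{u}\|=1}f(\mathbf{u})>0$. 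The remaining subtlety is that the RSP does not select the period maximizing the raw projection energy, but the one maximizing the periodicity metric $P(\mathbf{x}_q^{(k)},q)=w_q\|\mathbf{x}_q^{(k)}\|^2$ with $w_q=\tfrac{N+q}{2q}$ (Eq.~(\ref{EQ:PMxq})). I would bridge this gap by controlling the weight ratio: writing $q_0$ for the period of largest projection energy and $\hat{\mathbf{r}}=\mathbf{r}^{k-1}/\|\mathbf{r}^{k-1}\|$, the selection rule gives
\begin{align*}
w_{q_k}\|\mathbf{x}_{q_k}\|^2 \ge w_{q_0}\|\mathbf{x}_{q_0}^{(k)}\|^2 \ge w_{\min}\max_{q\in\Gamma}\|\mathbf{x}_q^{(k)}\|^2 = w_{\min}\,f(\hat{\mathbf{r}})\,\|\mathbf{r}^{k-1}\|^2 \ge w_{\min}\,\delta\,\|\mathbf{r}^{k-1}\|^2,
\end{align*}
where $w_{\min}=w_Q$; combining with $w_{q_k}\le w_{\max}=w_1$ gives $\|\mathbf{x}_{q_k}\|^2\ge(w_{\min}/w_{\max})\,\delta\,\|\mathbf{r}^{k-1}\|^2$, so one may set $\mu(\Gamma)=(w_{\min}/w_{\max})\,\delta$. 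That $\mu(\Gamma)\le1$ is automatic, since $\mathbf{x}_{q_k}$ is an orthogonal projection of $\mathbf{r}^{k-1}$ and hence $\|\mathbf{x}_{q_k}\|\le\|\mathbf{r}^{k-1}\|$.

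It then remains to read off the two limiting identities. Because $\mu(\Gamma)\in(0,1]$ forces $0\le1-\mu(\Gamma)<1$, the geometric bound gives $\|\mathbf{r}^{k}\|^2\to0$ as $k\to\infty$. Letting $K\to\infty$ in the finite decomposition $\mathbf{x}=\sum_{k=1}^{K}\mathbf{x}_{q_k}+\mathbf{r}^{K}$ then yields Eq.~(\ref{EQ:DEINF}), while letting $K\to\infty$ in the telescoped energy identity $\|\mathbf{x}\|^2=\sum_{k=1}^{K}\|\mathbf{x}_{q_k}\|^2+\|\mathbf{r}^{K}\|^2$ yields Eq.~(\ref{EQ:ENINF}), establishing in particular that the series of component energies converges. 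The only structural input that genuinely drives the argument is the spanning property $\mathcal{S}_1+\cdots+\mathcal{S}_Q=\mathcal{R}^N$ supplied by the representation condition; everything else is the generic matching-pursuit mechanism applied to the weighted selection rule.
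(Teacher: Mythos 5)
Your proposal is correct and follows essentially the same route as the paper's own proof: the same reduction to a per-step inequality via the energy recursion, the same handling of the weighted selection rule through a worst-case weight ratio (your constant $w_{\min}/w_{\max}=\frac{(N+Q)/(2Q)}{(N+1)/2}=\frac{N/Q+1}{N+1}$ is exactly the paper's $\alpha$), and the same geometric iteration to get the decay bound and the two limiting identities. The only divergence is in the key positivity step: where the paper cites Theorem 12.6 of Mallat to assert $\inf_{\mathbf{f}\in\mathcal{R}^N}\max_{q\in\Gamma}\|\text{Proj}(\mathbf{f},\mathcal{S}_q)\|^2/\|\mathbf{f}\|^2>0$, you prove it directly by continuity and compactness of the unit sphere together with the spanning property $\mathcal{S}_1+\cdots+\mathcal{S}_Q=\mathcal{R}^N$ (both arguments lean on that spanning assertion from Section \ref{SEC:RSP}), which makes your version self-contained but substantively identical.
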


\begin{proof}

Suppose that the residual $\mathbf{r}^k$ is already computed.
In the next iteration, the period $q_{k+1}$ is chosen with the maximum value of the periodicity metric defined in Eq. (\ref{EQ:PMxq}), which satisfies
\begin{align}
    \frac{N+q_{k+1}}{2q_{k+1}}\|\mathbf{x}_{q_{k+1}}\|^2 \geq \underset{q \in \Gamma }{\text{max}} \frac{N+q}{2q}\|  \mathbf{x}_q^{(k+1)}\|^2
\end{align}
or
\begin{align} \label{EQ:PRule}
    \|\mathbf{x}_{q_{k+1}}\|^2 \geq \underset{q \in \Gamma }{\text{max}} \frac{(N+q)q_{k+1}}{(N+q_{k+1})q} \|\mathbf{x}_q^{(k+1)}\|^2
\end{align}
where $\mathbf{x}_{q_{k+1}}=\text{Proj}(\mathbf{r}^{k},\mathcal{S}_{q_{k+1}})$ and  $\mathbf{x}_{q}^{(k+1)}=\text{Proj}(\mathbf{r}^{k},\mathcal{S}_q)$.
Then, the decomposition at the $k+1$ iteration is
\begin{align}
    \mathbf{r}^{k+1}=\mathbf{r}^k - \mathbf{x}_{q_{k+1}}
\end{align}
and
\begin{align}
    \|\mathbf{r}^{k+1}\|^2=\|\mathbf{r}^k\|^2 - \|\mathbf{x}_{q_{k+1}}\|^2
\end{align}
The rate of decay is
\begin{align} \label{EQ:RateDecay}
    \frac{\|\mathbf{r}^{k+1}\|^2}{\|\mathbf{r}^k\|^2}=1-\frac{\| \mathbf{x}_{q_{k+1}} \|^2}{\|\mathbf{r}^k\|^2}
\end{align}
According to Eq. (\ref{EQ:PRule}) and (\ref{EQ:RateDecay}), we have
\begin{align}
    \frac{\|\mathbf{r}^{k+1}\|^2}{\|\mathbf{r}^k\|^2} \leq 1 -  \underset{q \in \Gamma}{\text{max}} \frac{(N+q)q_{k+1}}{(N+q_{k+1})q}  \frac{ \|\mathbf{x}_{q}^{(k+1)}\|^2}{\|\mathbf{r}^k\|^2} \leq 1
\end{align}
Let
\begin{align}
    \eta(\mathbf{f}, \Gamma) \triangleq \underset{q \in \Gamma}{\text{max}} \frac{(N+q)q_{k+1}}{(N+q_{k+1})q}  \frac{ \| \text{Proj}( \mathbf{f}, \mathcal{S}_q)\|^2}{\|\mathbf{f}\|^2}
\end{align}
where $\mathbf{f} \in \mathcal{R}^N$,
then
\begin{align} \label{EQ:ETA}
    \frac{\|\mathbf{r}^{k+1}\|^2}{\|\mathbf{r}^k\|^2} < 1 - \eta(\mathbf{r}^k, \Gamma)
\end{align}
Note that, for any $q$ and $q_{k+1}$ taken from $\Gamma$, it satisfies that
\begin{align} \label{EQ:NQq}
  \frac{(N+q)q_{k+1}}{(N+q_{k+1})q} \geq \alpha > 0
\end{align}
where $\alpha = \frac{N/Q+1}{N+1}$,
and we have
\begin{align}
    \eta(f, \Gamma) \geq \alpha \underset{q \in \Gamma}{\text{max}}  \frac{ \| \text{Proj}( \mathbf{f}, \mathcal{S}_q)\|^2}{\|\mathbf{f}\|^2}
\end{align}
In terms of the Theorem 12.6 in \cite{Mallat2008}, it shows that
\begin{align}
	\underset{\mathbf{f} \in \mathcal{R}^N}{\text{inf}} \underset{q \in \Gamma}{\text{max}} \frac{\|\text{Proj}(\mathbf{f},\mathcal{S}_q)\|^2}{\|\mathbf{f}\|^2}>0
\end{align}
Hence, we have
\begin{align}
	\mu(\Gamma) = \underset{\mathbf{f}\in \mathcal{R}^N}{\text{inf}} \eta(\mathbf{f},\Gamma) > 0
\end{align}
Thus, the rate of decay satisfies
\begin{align} \label{EQ:MU}
    \frac{\|\mathbf{r}^{k+1}\|^2}{\|\mathbf{r}^k\|^2} < 1- \mu(\Gamma)
\end{align}

Iterating on Eq. (\ref{EQ:MU}) proves that
\begin{align}
    \|\mathbf{r}_{k}\|^2 \leq (1-\mu(\Gamma))^k \|\mathbf{x}\|^2
\end{align}
and such that
\begin{align}
   \underset{k \rightarrow 0}{\lim} \|\mathbf{r}_k\|^2 = 0
\end{align}
Hence, both the Eq. (\ref{EQ:DEINF}) and Eq. (\ref{EQ:ENINF}) are established.

\end{proof}

Since the RSP requires to calculate all the projections of the signal $\mathbf{x}$ into each Ramanujan subspace in each iteration,
the RSP suffers from higher computational cost.
For a given signal $\mathbf{x}$ of length $N$ and the maximum period $Q$, the total computational cost of the RSP with $K$ iterations is $ \mathcal{O}(KQN^2)$.
Fortunately, we will show that the computational cost of the RSP can be significantly reduced in the next section.

%
%

\section{Fast Ramanujan subspace pursuit and periodic distance}
\label{SEC:FRSP}

The high cost of the RSP is mainly attributed to the calculation of the projections of the exactly periodic components of the signal $\mathbf{x}$ into each Ramanujan subspace.
However, it is unnecessary for the RSP to calculate these projections in each iteration.
In this section, we present a fast algorithm for the RSP based on the following results.
Firstly, the relationship between the periodic subspace $\mathcal{P}_q$ and the Ramanujan subspace $\mathcal{S}_q$ reveals that the energy $\|\mathbf{x}_q\|^2$ of the orthogonal projection in $\mathcal{S}_q$ can be iteratively calculated from the energy $\|{\check{\mathbf{x}}}_q\|^2$ of the orthogonal projection in $\mathcal{P}_q$.
Secondly, the $\|{\check{\mathbf{x}}}_q\|^2$ can be easily estimated from the ACF of the signal $\mathbf{x}$ based on the MLE.
As a result, the computational cost of the fast RSP can be reduced from $ \mathcal{O}(KQN^2)$ to $\mathcal{O}(K N^2)$ and can be further reduced to $ \mathcal{O}(KN \log N)$ with the Fast Fourier Transformation (FFT) algorithm.

\subsection{Periodic subspace and exactly periodic subspace}

\begin{lemma} \label{LEMMA1}
Let $q_1, \cdots, q_K$ be all the divisors of $q$, including $1$ and $q$, and $\mathcal{S}_{q_1}, \cdots, \mathcal{S}_{q_K}$ be the Ramanujan subspaces corresponding to these divisors, then
\begin{align}
	\mathcal{S}_{q_1}, \cdots, \mathcal{S}_{q_K} \subseteq \mathcal{P}_q
\end{align}
where the ``$=$'' is hold only when $q=1$, i.e., $\mathcal{S}_{1}=\mathcal{P}_{1}$.
\end{lemma}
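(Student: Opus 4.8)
The plan is to prove the inclusion directly from the periodicity characterization of the Ramanujan subspace, and then to settle the strictness claim by counting dimensions.

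First I would fix an arbitrary divisor $q_i$ of $q$ and an arbitrary signal $\mathbf{y} \in \mathcal{S}_{q_i}$. By the characterization recalled in Section~\ref{SEC:RS} (Theorem~10 of \cite{Vaidyanathan2014_1}), the Ramanujan subspace $\mathcal{S}_{q_i}$ coincides with the \emph{exactly $q_i$-periodic} subspace, so $\mathbf{y}$ satisfies $y(n+q_i)=y(n)$ for all $n$. Since $q_i \mid q$ I may write $q=m q_i$ with $m \in \mathbb{Z}_+$ and iterate the $q_i$-periodicity $m$ times:
\begin{align*}
  y(n+q)=y(n+mq_i)=y\bigl(n+(m-1)q_i\bigr)=\cdots=y(n).
\end{align*}
Hence $\mathbf{y}$ is $q$-periodic, i.e. $\mathbf{y}\in\mathcal{P}_q$. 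As $\mathbf{y}$ was arbitrary, $\mathcal{S}_{q_i}\subseteq\mathcal{P}_q$, which is the stated inclusion for every divisor of $q$.

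For the equality claim I would argue by dimension. When $q=1$ the only divisor is $q_1=1$, and both $\mathcal{S}_1$ and $\mathcal{P}_1$ are the line spanned by $\mathbf{1}$, so $\mathcal{S}_1=\mathcal{P}_1$. When $q>1$, I would combine $\dim(\mathcal{P}_q)=q$ with $\dim(\mathcal{S}_{q_i})=\phi(q_i)$. Because $\phi(q_i)\le q_i\le q$, and $\phi(q_i)=q$ would force both $q_i=1$ and $q=1$, we get $\phi(q_i)<q$ for every divisor $q_i$ whenever $q>1$; hence each inclusion is strict and equality is impossible. Equivalently, one may invoke the orthogonal decomposition $\mathcal{P}_q=\bigoplus_{d\mid q}\mathcal{S}_d$ together with the totient identity $\sum_{d\mid q}\phi(d)=q$, which simultaneously yields the inclusions and shows that $\mathcal{S}_{q_i}$ exhausts $\mathcal{P}_q$ only when $q$ has a single divisor.

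The argument carries no heavy computation; the only points requiring care are that I am using the correct prior result identifying $\mathcal{S}_{q_i}$ with the exactly $q_i$-periodic subspace, and that the relation $x(n+q_i)=x(n)$ is read consistently (with indices understood as in the circulant construction of $\mathbf{B}_q$) so that the iteration above is legitimate. I expect the only mild obstacle to be justifying $\dim(\mathcal{P}_q)=q$ for the strictness step, which should follow immediately from the fact that a $q$-periodic signal is freely determined by its values on a single period.
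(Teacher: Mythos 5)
Your proposal is correct, and its first half coincides with the paper's own argument: the paper likewise proves the inclusion by noting that any exactly $q_k$-periodic signal is $(q_k M_k)$-periodic with $M_k = q/q_k$, which is exactly your iteration $y(n+q)=y(n+mq_i)=\cdots=y(n)$, only written less explicitly. Where you genuinely diverge is the strictness claim. The paper disposes of it with the bare assertion that ``not every $q$-periodic signal is exactly $q$-periodic,'' i.e.\ it postulates the existence of some $\mathbf{x}\in\mathcal{P}_q$ with $\mathbf{x}\notin\mathcal{S}_{q_k}$ without exhibiting one; your dimension count ($\dim(\mathcal{S}_{q_i})=\phi(q_i)\le q_i$, with $\phi(q_i)=q$ forcing $q_i=q=1$, against $\dim(\mathcal{P}_q)=q$) closes that gap rigorously using only facts already available at this point in the paper (the dimension formula for $\mathcal{S}_q$ from \cite{Vaidyanathan2014_1} and the elementary fact that a $q$-periodic signal is determined by one period). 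One caution: your ``equivalently'' remark invoking $\mathcal{P}_q=\bigoplus_{d\mid q}\mathcal{S}_d$ and $\sum_{d\mid q}\phi(d)=q$ should not be used as the proof here, since in this paper that decomposition is Corollary \ref{COR:DEC}, which is itself derived from Lemma \ref{LEMMA1} together with Theorem \ref{THE:ORTH}; leaning on it would be circular. As a standalone argument, your primary route is fine and in fact tighter than the paper's.
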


\begin{proof}

Let $q_k$ be the divisor of $q$ and $M_k = q/q_{k}$.
With the Definition \ref{DEF:periodic} and \ref{DEF:Eperiodic},
any \textit{exactly $q_k$-periodic} signal $\mathbf{x}_{q_k}$ must be \textit{$(q_k M_k)$-periodic} for $M_k$, i.e., $\forall \mathbf{x}_{q_k}\in \mathcal{S}_{q_k}$, then $\mathbf{x}_{q_k} \in \mathcal{P}_q$.
However, not every \textit{$q$-periodic} signal $\mathbf{x}_q$ is \textit{exactly $q$-periodic}, i.e.,
$\exists \mathbf{x}_q \in \mathcal{P}_{q}$, but $\mathbf{x}_q \notin \mathcal{S}_{q_k}$. Then, we have
\begin{align*}
\mathcal{S}_{q_k} \subset \mathcal{P}_q
\end{align*}
Hence, we have
\begin{align*}
\mathcal{S}_{q_1}, \cdots, \mathcal{S}_{q_K} \subseteq \mathcal{P}_q
\end{align*}
where the ``$=$'' is held only when $q=1$, i.e., $\mathcal{S}_{1}=\mathcal{P}_{1}$.

\end{proof}

To clearly describe the relationship between the Ramanujan subspace $\mathcal{S}_q$ and the \textit{periodic} subspace $\mathcal{P}_q$, we introduce the following Theorem.

\begin{theorem} \label{THE:ORTH}
    Let $q_i$ and $q_j$ be any two different divisors of $q$, the corresponding Ramanujan subspaces $\mathcal{S}_{q_i}$ and $\mathcal{S}_{q_j}$ are orthogonal to each other,
    that is
    \begin{align}
        \mathcal{S}_{q_i} \perp \mathcal{S}_{q_j}
    \end{align}
    where $q_i \neq q_j$.
\end{theorem}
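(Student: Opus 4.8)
The plan is to pass to the frequency domain and show that $\mathcal{S}_{q_i}$ and $\mathcal{S}_{q_j}$ are spanned by \emph{disjoint} families of length-$q$ DFT basis vectors; since distinct DFT basis vectors are orthogonal, this yields $\mathcal{S}_{q_i}\perp\mathcal{S}_{q_j}$ immediately. First I would fix the ambient space. Because $q_i\mid q$ and $q_j\mid q$, Lemma \ref{LEMMA1} places both Ramanujan subspaces inside $\mathcal{P}_q$, and since every length-$q$ vector is trivially $q$-periodic we may regard $\mathcal{S}_{q_i},\mathcal{S}_{q_j}\subseteq\mathcal{R}^q$ (each extended by periodic repetition as in Section \ref{SEC:OP}). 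Working in this common space $\mathcal{R}^q$ is what lets me compare the two subspaces through a single length-$q$ DFT.

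Next I would read off the Fourier support of each subspace directly from the definition of the Ramanujan sum. Since $c_{q_i}(n)=\sum_{(k,q_i)=1} e^{j2\pi kn/q_i}$, every column of $\mathbf{B}_{q_i}$ (a circular shift of $c_{q_i}$) is a superposition of the exponentials $e^{j2\pi kn/q_i}$ with $\gcd(k,q_i)=1$, and hence $\mathcal{S}_{q_i}=\mathrm{Range}(\mathbf{B}_{q_i})$ is exactly the span of these exponentials. Viewed in $\mathcal{R}^q$ one has $e^{j2\pi kn/q_i}=e^{j2\pi (kq/q_i)n/q}$, so each such exponential is precisely the length-$q$ DFT basis vector at the integer index $kq/q_i$. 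Thus $\mathcal{S}_{q_i}$ is the span of the DFT basis vectors indexed by
\begin{align*}
  A_i=\Big\{\tfrac{kq}{q_i}\ :\ 1\le k\le q_i,\ (k,q_i)=1\Big\},
\end{align*}
and likewise $\mathcal{S}_{q_j}$ is indexed by $A_j=\{kq/q_j:(k,q_j)=1\}$. The coprimality condition is exactly what enforces \emph{exact} periodicity: it keeps the frequency $k/q_i$ in lowest terms, pinning its denominator to $q_i$.

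Finally I would show $A_i\cap A_j=\varnothing$ whenever $q_i\ne q_j$. If some index lay in both sets, then $k_i q/q_i=k_j q/q_j$ for suitable $k_i,k_j$, i.e.\ $k_i/q_i=k_j/q_j$ as rationals; but $(k_i,q_i)=1$ and $(k_j,q_j)=1$ mean both fractions are already in lowest terms, and equal reduced fractions have equal denominators, forcing $q_i=q_j$ — a contradiction. With $A_i$ and $A_j$ disjoint, $\mathcal{S}_{q_i}$ and $\mathcal{S}_{q_j}$ are spanned by orthogonal families of DFT vectors, so every vector of one is orthogonal to every vector of the other, which is exactly $\mathcal{S}_{q_i}\perp\mathcal{S}_{q_j}$.

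I expect the only genuine obstacle to be the bookkeeping of the embedding into $\mathcal{R}^q$: one must verify that the repetition-extended subspace $\mathcal{\widetilde{S}}_{q_i}$ of Section \ref{SEC:OP} really coincides with the span of the DFT vectors at the indices $kq/q_i$, and that the projection $\mathbf{\widetilde{P}}_{q_i}$ is therefore circulant with $0/1$ spectrum supported on $A_i$. Equivalently, the whole claim reduces to $\mathbf{\widetilde{P}}_{q_i}\mathbf{\widetilde{P}}_{q_j}=\mathbf{0}$, which holds precisely because these two circulants have disjoint spectral supports $A_i$ and $A_j$. Once that identification is nailed down, everything that remains is the elementary lowest-terms argument above.
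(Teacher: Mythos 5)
Your proof is correct, but it reaches the conclusion by a genuinely different route than the paper. The paper argues at the level of projection matrices: for $\mathbf{x} \in \mathcal{S}_{q_i}$ it writes $\mathbf{\widetilde{P}}_{q_j}\mathbf{x}=\mathbf{\widetilde{P}}_{q_j}\mathbf{\widetilde{P}}_{q_i}\mathbf{x}$ and then kills the product by invoking, as a cited black box, the orthogonality (zero cross-correlation) of the Ramanujan sequences $c_{q_i}$ and $c_{q_j}$ for distinct divisors of $q$ (Eq.~(15) in the cited Vaidyanathan paper), giving $\mathbf{\widetilde{P}}_{q_j}\mathbf{\widetilde{P}}_{q_i}=\mathbf{0}$ and hence $\mathcal{S}_{q_i}\perp\mathcal{S}_{q_j}$. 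You instead re-derive that underlying number-theoretic fact from scratch in the frequency domain: the Fourier support of $\mathcal{S}_{q_i}$ inside $\mathcal{R}^q$ is $A_i=\{kq/q_i:(k,q_i)=1\}$, the sets $A_i$ and $A_j$ are disjoint by the lowest-terms argument, and spans of disjoint families of DFT vectors are orthogonal. Your closing observation that everything reduces to $\mathbf{\widetilde{P}}_{q_i}\mathbf{\widetilde{P}}_{q_j}=\mathbf{0}$ is exactly the identity the paper uses, so the two proofs meet at the same algebraic core; what yours buys is self-containedness (you prove the cross-correlation property rather than cite it) and a transparent mechanism --- coprimality pins the reduced denominator of each frequency, so distinct divisors cannot share a frequency bin --- while the paper's version buys brevity. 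Two small refinements to your write-up: (i) the ``only genuine obstacle'' you flag, namely that $\mathcal{\widetilde{S}}_{q_i}$ coincides \emph{exactly} with the span of the DFT vectors indexed by $A_i$, is not actually needed --- the containment $\mathcal{S}_{q_i}\subseteq \text{span}\{\text{DFT vectors indexed by } A_i\}$, which is immediate because every column of $\mathbf{B}_{q_i}$ is a superposition of the exponentials $e^{j2\pi kn/q_i}$ with $(k,q_i)=1$, already suffices for orthogonality (equality would follow by dimension count from $\text{rank}(\mathbf{B}_{q_i})=\phi(q_i)$, but you can skip it); and (ii) since you span real subspaces with complex exponentials, add the one-line remark that $A_i$ is closed under negation modulo $q$, so the complex spans are conjugation-closed and their Hermitian orthogonality restricts to orthogonality of the real vectors in $\mathcal{S}_{q_i}$ and $\mathcal{S}_{q_j}$ under the real inner product.
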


\begin{proof}

    Let $\mathbf{\widetilde{P}}_{q_i}$ and $\mathbf{\widetilde{P}}_{q_j}$ be the orthogonal projectors of $\mathcal{S}_{q_i}$ and $\mathcal{S}_{q_j}$ defined in Eq. (\ref{EQ:PMor}), respectively, where both $\mathbf{\widetilde{P}}_{q_i}$ and $\mathbf{\widetilde{P}}_{q_j}$ are the $N \times N$ matrices.
    Suppose $\mathbf{x} \in \mathcal{S}_{q_i}$, then  $\mathbf{\widetilde{P}}_{q_i}\mathbf{x}=\mathbf{x}$.
    The orthogonal projection of $\mathbf{x}$ into $\mathcal{S}_{q_j}$ is
    \begin{align*}
        \mathbf{\widetilde{P}}_{q_j}\mathbf{x}=\mathbf{\widetilde{P}}_{q_j}\mathbf{\widetilde{P}}_{q_i}\mathbf{x}
    \end{align*}
    According to the orthogonality of the Ramanujan sequence in Eq. (15) in \cite{Vaidyanathan2014_1},
    we know that any row of $\mathbf{\widetilde{P}}_{q_i}$ and any column of $\mathbf{\widetilde{P}}_{q_j}$ are orthogonal to each other if $q_i$ and $q_j$ are the divisors of $q$ and $q_i \neq q_j$, that is
    \begin{align*}
        \mathbf{\widetilde{P}}_{q_j}\mathbf{\widetilde{P}}_{q_i}=\mathbf{0}_{N \times N}
    \end{align*}
    Consequently, we have
    $\mathbf{\widetilde{P}}_{q_j}\mathbf{x} = \mathbf{0}_{N \times 1} $.
    Similarly, we can also obtain that $\mathbf{\widetilde{P}}_{q_i}\mathbf{x} = \mathbf{0}_{N \times 1}$, $\forall \mathbf{x} \in \mathcal{S}_{q_j}$.
    Thus, both Ramanujan subspaces $\mathcal{S}_{q_i}$ and $\mathcal{S}_{q_j}$ are orthogonal to each other
    \begin{align*}
        \mathcal{S}_{q_i} \bot \mathcal{S}_{q_j}
    \end{align*}

\end{proof}

According to Lemma \ref{LEMMA1} and Theorem \ref{THE:ORTH},
we can obtain the relationship between the periodic subspace and the Ramanujan subspace, which is introduced in the following corollary.

\begin{corollary} \label{COR:DEC}
    Let $q_1,\cdots,q_K$ be all the divisors of $q$, including $1$ and $q$, then the corresponding Ramanujan subspaces $\mathcal{S}_{q_1}, \cdots,\mathcal{S}_{q_K} \subset \mathcal{P}_q$ are orthogonal to each other.
    Moreover, these Ramanujan subspaces form an orthogonal decomposition of the \textit{$q$-periodic} space $\mathcal{P}_q$,
    that is,
    \begin{align}
        \mathcal{P}_q = \mathcal{S}_{q_1} \oplus  \cdots \oplus \mathcal{S}_{q_K}
    \end{align}
    and the dimensions of these subspaces satisfy that
    \begin{align}
        \text{dim}(\mathcal{P}_q) = \sum_{k=1}^{K}\text{dim}(\mathcal{S}_{q_k})=\sum_{k=1}^{K} \phi(q_k)=q
    \end{align}

\end{corollary}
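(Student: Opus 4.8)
The plan is to combine the containment from Lemma \ref{LEMMA1} with the pairwise orthogonality from Theorem \ref{THE:ORTH}, and then to upgrade the resulting inclusion to an equality by a dimension count. First I would note that Lemma \ref{LEMMA1} gives $\mathcal{S}_{q_k} \subseteq \mathcal{P}_q$ for every divisor $q_k$ of $q$, while Theorem \ref{THE:ORTH} guarantees $\mathcal{S}_{q_i} \perp \mathcal{S}_{q_j}$ whenever $q_i \neq q_j$. Pairwise orthogonal nonzero subspaces are automatically linearly independent, so their sum is a genuine orthogonal direct sum $\mathcal{S}_{q_1} \oplus \cdots \oplus \mathcal{S}_{q_K}$, and this direct sum is contained in $\mathcal{P}_q$.

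Next I would compute the dimension of this direct sum. Because the summands are mutually orthogonal, the dimension of the sum equals the sum of the dimensions, and using the fact established earlier that $\dim(\mathcal{S}_{q_k}) = \phi(q_k)$, I obtain
\begin{align*}
\dim\left( \mathcal{S}_{q_1} \oplus \cdots \oplus \mathcal{S}_{q_K} \right) = \sum_{k=1}^{K} \phi(q_k) = \sum_{d \mid q} \phi(d).
\end{align*}
At this point I would invoke the classical Gauss divisor-sum identity $\sum_{d \mid q} \phi(d) = q$, so that the direct sum has dimension exactly $q$.

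Finally I would show $\dim(\mathcal{P}_q) = q$: a \textit{$q$-periodic} signal is completely determined by its values on a single period, so $\mathcal{P}_q$ is spanned by the $q$ independent ``impulse-on-one-period'' patterns and hence has dimension $q$. Combining this with the inclusion $\mathcal{S}_{q_1} \oplus \cdots \oplus \mathcal{S}_{q_K} \subseteq \mathcal{P}_q$ and the fact that both spaces have dimension $q$, I conclude that the inclusion must be an equality, which simultaneously yields the orthogonal decomposition $\mathcal{P}_q = \mathcal{S}_{q_1} \oplus \cdots \oplus \mathcal{S}_{q_K}$ and the dimension identity. The main obstacle I anticipate is not any single step but rather keeping the ambient space consistent: the Ramanujan subspaces $\mathcal{S}_{q_k}$ and the periodic subspace $\mathcal{P}_q$ must all be regarded as subspaces of the same $\mathcal{R}^N$ (with $N$ a multiple of $q$) via the repeated-block construction of Section \ref{SEC:OP}, so that both the orthogonality asserted by Theorem \ref{THE:ORTH} and the equality of dimensions are being compared inside one common space.
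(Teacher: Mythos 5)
Your proof is correct and follows essentially the same route as the paper: the paper presents this corollary as an immediate consequence of Lemma \ref{LEMMA1} (containment of each $\mathcal{S}_{q_k}$ in $\mathcal{P}_q$) and Theorem \ref{THE:ORTH} (pairwise orthogonality), which are exactly the two ingredients you combine. The dimension count you spell out --- $\dim(\mathcal{S}_{q_k})=\phi(q_k)$, the Gauss identity $\sum_{d \mid q}\phi(d)=q$, and $\dim(\mathcal{P}_q)=q$ from the fact that a $q$-periodic signal is determined by one period --- is precisely the step the paper leaves implicit in upgrading the orthogonal inclusion to the full decomposition $\mathcal{P}_q = \mathcal{S}_{q_1} \oplus \cdots \oplus \mathcal{S}_{q_K}$.
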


The Corollary \ref{COR:DEC} provides a way to orthogonally decompose a \textit{$q$-periodic} signal into a sum of its {exactly periodic} components.
Let ${\mathbf{x}} \in \mathcal{P}_q$ denote the \textit{$q$-periodic} signal and $q_1,\cdots,q_K$ be all the possible divisors of $q$ (including $1$ and $q$).
With Corollary \ref{COR:DEC}, ${\mathbf{x}}$ can be orthogonally decomposed into a sum of the {exactly periodic} components $\{ \mathbf{x}_{q_k} \}_{k=1}^{K}$ which satisfy that
\begin{align}\label{EQ:SUBDEC}
    {{\mathbf{x}}} = \sum_{k=1}^{K} \mathbf{x}_{q_k}
\end{align}
and
\begin{align}\label{EQ:SUBEN}
    \|{{\mathbf{x}}}\|^2 = \sum_{k=1}^K \|\mathbf{x}_{q_k}\|^2
\end{align}

The Corollary \ref{COR:DEC} indicates the relationship between the energies of the projections in the Ramanujan subspaces and the energy of the projection in the \textit{$q$-periodic} subspace in Eq. (\ref{EQ:SUBEN}).
The former are directly related to the calculation of the periodicity metrics of these exactly periodic components in the corresponding Ramanujan subspaces.
In fact, we will show that these periodicity metrics can be calculated without projecting the signal into each Ramanujan subspace.

\subsection{Fast Ramanujan subspace pursuit}

\begin{figure*}[h]
        \begin{minipage}[b]{1.0\linewidth}
          \centering
          \centerline{\includegraphics[width=13cm]{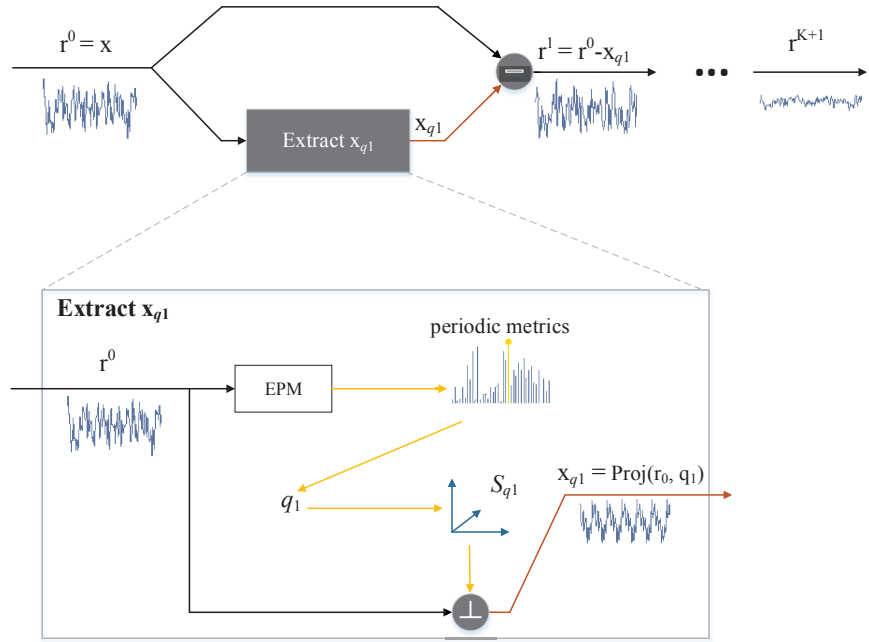}}
        \end{minipage}
        \caption{Demonstration of the fast Ramanujan subspace pursuit algorithm.}
        \label{FIG:demo_alg2}
\end{figure*}

\begin{algorithm}[b]
    \caption{\small Estimating the periodic metrics (EPM)}

    \textbf{Input:} signal $\mathbf{x}\in \mathcal{R}^N$ \\
    \textbf{Output:} periodicity metrics $P(\mathbf{x}_1,1),\cdots, P(\mathbf{x}_Q,Q)$

    \textbf{Initialize:}
    Compute $\|\mathbf{x}_1\|^2 = \|\check{\mathbf{x}}_1\|^2$ with Eq. (\ref{EQ:EMLE})

    for $q=2$ to $Q$

     \ \ \ \   Estimate $\|\check{\mathbf{x}}_q\|^2$ with Eq. (\ref{EQ:EMLE})

     \ \ \ \   Calculate $\|\mathbf{x}_q\|^2$ with Eq. (\ref{EQ:ENEPSD})

     \ \ \ \   Calculate periodicity metric $P(\mathbf{x}_q, q)$ with Eq. (\ref{EQ:PMxq}) or (\ref{EQ:PMxqa})

    end

    \label{Alg_EPSD}

\end{algorithm}

Given a signal $\mathbf{x} \in \mathcal{R}^N$, our purpose is to calculate the energy $\|\mathbf{x}_q\|^2$ of the \textit{exactly $q$-periodic} component in the Ramanujan subspace $\mathcal{S}_q$ without projecting $\mathbf{x}$ into $\mathcal{S}_q$.
Suppose that the energy $\|\check{\mathbf{x}}_q\|^2$ of the \textit{$q$-peroidic} component and the energies $\{\|\mathbf{x}_{q_k}\|^2\}_{k=1}^{K-1}$ of the exactly periodic components are already computed, where $q_1, \cdots, q_{K-1}$ are the first $K-1$ possible divisors (except $q$) of $q$.
Then, according to Eq. (\ref{EQ:SUBEN}), the energy $\| \mathbf{x}_q \|^2$ of the \textit{exactly $q$-periodic} component can be calculated by
\begin{align}\label{EQ:ENEPSD}
	\| \mathbf{x}_q \|^2 = \|\check{\mathbf{x}}_q\|^2-\sum_{k=1}^{K-1} \|\mathbf{x}_{q_k}\|^2
\end{align}
which can be used to calculate the periodicity metric with Eq. (\ref{EQ:PMxq}) or (\ref{EQ:PMxqa}).

Although the energy $\|\check{\mathbf{x}}_q\|^2$ of the \textit{$q$-peroidic} component of $\mathbf{x}$ can be obtained by projecting $\mathbf{x}$ into $\mathcal{P}_q$ with the periodic transform \cite{Sethares99},
it can be estimated by using the MLE \cite{Wise1976,Muresan2003}, that is
\begin{align}\label{EQ:EMLE}
	\|\check{\mathbf{x}}_q\|^2=\frac{q}{N}\left( \varphi_{\mathbf{x}}(0)+2\sum_{l=1}^{M-1} \varphi_{\mathbf{x}}(lq) \right)
\end{align}
where $\varphi_{\mathbf{x}}(\cdot)$ is the ACF of $\mathbf{x}$ defined in Eq. (\ref{EQ:ACF}) and $M=\lfloor N/q\rfloor$.
Thus, the energy $\|\mathbf{x}_q\|^2$ of the \textit{exactly $q$-periodic} component of $\mathbf{x}$ can be iteratively computed according to Eq. (\ref{EQ:ENEPSD}) with the initial condition $\mathbf{x}_1 = \check{\mathbf{x}}_1$.
Base on $\|\mathbf{x}_q\|^2$, the periodicity metric of the exactly periodic component $\mathbf{x}_q$ can be computed with Eq. (\ref{EQ:PMxq}) or (\ref{EQ:PMxqa}) for $q$ from $1$ up to $Q$.
The implementation of estimating the periodic metrics (EPM) is shown in Algorithm \ref{Alg_EPSD}.
Based on the Algorithm \ref{Alg_EPSD}, the fast RSP (FRSP) can be performed as shown in \ref{Alg_EstimatHP}.
Moreover, the whole FRSP algorithm is demonstrated in Fig. \ref{FIG:demo_alg2}.

Since the periodicity metrics of the exactly periodic components can be directly estimated from the ACF of the residual in each iteration, the computational cost of the FRSP can be reduced to $\mathcal{O}(K N^2)$ compared with the original RSP algorithm.
Moreover, when the ACF of the residual is more effective calculated by using the FFT in each iteration, the computational cost of the FRSP can be further reduced to $ \mathcal{O}(KN \log N)$.

\begin{algorithm}[t]
    \caption{\small Fast Ramanujan subspace pursuit (FRSP)}

    \textbf{Input:} signal $\mathbf{x}\in \mathcal{R}^N$ \\
    \textbf{Output:} Exactly periodic components $\{ \mathbf{x}_q \}_{q=1}^Q$

    \textbf{Initialize:} Set $\mathbf{r}_0=\mathbf{x}$ \\
    for $k=1,\cdots,K$

      \ \ \ \ Calculate exactly periodic measures $\{ P(\mathbf{r}^k,q) \}_{q=1}^Q$ with Algorithm \ref{Alg_EPSD}

      \ \ \ \ Select dominant period $q_k = \underset{q \in \Gamma }{\text{argmax}} \left\{ P(\mathbf{r}^k, q) \right\}$

      \ \ \ \ Generate Ramanujan subspace $S_{q_k}$

      \ \ \ \ Calculate $ \mathbf{x}_{q_k} = \text{Proj}(\mathbf{r}^k, \mathcal{S}_{q_k})$

      \ \ \ \ Update residual $\mathbf{r}^{k+1}=\mathbf{r}^k-\mathbf{x}_{q_k}$

    end 

    \label{Alg_EstimatHP}

\end{algorithm}

\subsection{Periodic energy spectrum and periodic distance}

With the FRSP algorithm, any signal $\mathbf{x}$ can be decomposed into a sum of the exactly periodic components.
The decomposition can provide a visual representation named as the periodic energy spectrum (PES) to represent the periodic structures of the signal.
The PES of the signal $\mathbf{x}$ is defined by
\begin{align} \label{EQ:PES}
    PES(q)=\|\mathbf{x}_q\|^2, \text{ for } q=1,\cdots, Q
\end{align}
where $\mathbf{x}_q$ is the \textit{exactly $q$-periodic} component obtained by using the FRSP.
Unlike the power spectrum which describes the energy of a signal is distributed over the continuous frequency in the field of signal processing,
the PES describes the energy of the signal is distributed over the discrete and integer period.
For example, Fig. \ref{FIG:SimilarityofTwoSig} (c) and (d) show the PESs of the signal $\mathbf{x}$ of length $200$ in Fig. \ref{FIG:SimilarityofTwoSig} (a) and the signal $\mathbf{y}$ of length $241$ in Fig. \ref{FIG:SimilarityofTwoSig} (b), respectively.
The PESs of the two signals reveal their similar periodic structures that both of them contain the exactly periodic components with periods: $7$, $15$, $24$, and $27$.

\begin{figure}[h]
        \begin{minipage}[h]{1.0\linewidth}
          \centering
          \centerline{\includegraphics[width=9.5cm]{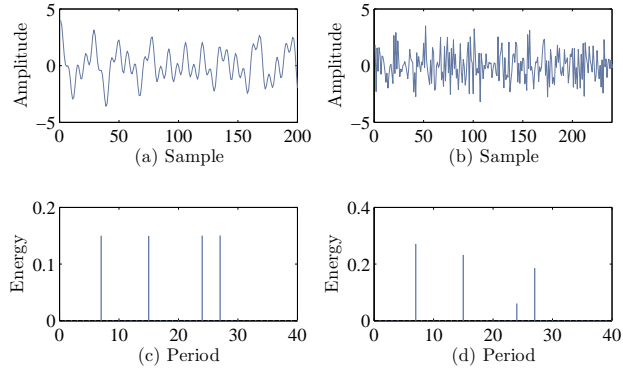}}
        \end{minipage}
        \caption{(a) Signal $\mathbf{x}$ of length $200$; (b) Signal $\mathbf{y}$ of length $241$; (c) Periodic energy spectrum of the signal $\mathbf{x}$; (d) Periodic energy spectrum of the signal $\mathbf{y}$.}
        \label{FIG:SimilarityofTwoSig}
\end{figure}

The PESs of signals provide a way to measure their distance and similarity based on their periodic structures.
Although the concept of the periodic distance is first introduced in \cite{Vlachos2005}, we will redefine it in a more general form based on the Hellinger distance (or Hellinger divergence) \cite{Jebara2004, Tran2011},
which can be seen as a symmetric approximation to the Kullback-Leibler (KL) divergence \cite{Jaakkola1999}.
The Hellinger distance can be used to measure the distance of two histograms $\mathbf{h}_A$ and $\mathbf{h}_B$
\begin{align}
D(\mathbf{h}_A,\mathbf{h}_B) = \frac{1}{2}\sum_{n=1}^N \left( \sqrt{\mathbf{h}_A(n)} - \sqrt{\mathbf{h}_B(n)}  \right)^2
\end{align}
where $\mathbf{h}_A=[\mathbf{h}_A(1),\cdots,\mathbf{h}_A(N)]$ and $\mathbf{h}_B=[\mathbf{h}_B(1),\cdots,\mathbf{h}_B(N)]$ are two histograms.

Similar with the PES in Eq. (\ref{EQ:PES}), the histogram of the energy distribution of the periodic components in the signal $\mathbf{x}$, named periodic energy histogram, can be defined by
\begin{align}
    h_E(\mathbf{x})=\frac{1}{\mathbf{\|x\|^2}} \left[\|\mathbf{x}_1\|^2, \cdots, \|\mathbf{x}_Q\|^2 \right]
\end{align}
where $Q$ is the maximum period and $\{\mathbf{x}_q\}_{q=1}^{Q}$ are the \textit{exactly $q$-periodic} components.
Note that only $K$ elements in $h_E(\mathbf{x})$ are non-zero and are obtained by using the FRSP with $K$ iterations.
Then, the Hellinger distance between $\mathbf{x}$ and $\mathbf{y}$ (note that they may have different length) is defined by
\begin{align}
D \left(h_E(\mathbf{x}),h_E(\mathbf{y}) \right)=\frac{1}{2}\sum_{q=1}^Q \left( \frac{\|\mathbf{x}_q \|}{\|\mathbf{x}\|}-\frac{\|\mathbf{y}_q\|}{\|\mathbf{y}\|}  \right)^2
\end{align}
where $\mathbf{x}_q$ and $\mathbf{y}_q$ are the \textit{exactly $q$-periodic} components of $\mathbf{x}$ and $\mathbf{y}$, respectively.
For simplicity, the periodic energy histogram of $\mathbf{x}$ is defined by
\begin{align} \label{EQ:PEHIST}
\mathbf{h_x}=\frac{1} {\mathbf{\|x\|}} \left[ \|\mathbf{x}_1\|, \cdots, \|\mathbf{x}_Q\| \right]^T
\end{align}
Then, the Hellinger distance between $\mathbf{x}$ and $\mathbf{y}$ is defined by
\begin{align}
    D(\mathbf{x}, \mathbf{y}) = \frac{1}{2}\| \mathbf{h_x}- \mathbf{h_y} \|^2
\end{align}
where $\mathbf{h_x}$ and $\mathbf{h_y}$ are the periodic energy histograms of $\mathbf{x}$ and $\mathbf{y}$ defined by Eq. (\ref{EQ:PEHIST}), respectively.
And, the cosine Hellinger distance between $\mathbf{x}$ and $\mathbf{y}$ is defined by
\begin{align}
	D_{cos}(\mathbf{x}, \mathbf{y})=\frac{1}{2}\frac{\| \mathbf{h_x}-\mathbf{h_y} \|^2}{\|\mathbf{h_x}\| \|\mathbf{h_y}\|}
\end{align}
where $D_{cos}(\mathbf{x}, \mathbf{y}) \in [0,1]$.
By the cosine Hellinger distance, we can define the periodic similarity between $\mathbf{x}$ and $\mathbf{y}$ as follows
\begin{align}\label{EQ:SIMPER}
	S(\mathbf{x},\mathbf{y}) = 1- D_{cos}(\mathbf{x},\mathbf{y})
\end{align}
where $S(\mathbf{x}, \mathbf{y}) \in [0,1]$.

With the periodic similarity defined in Eq. (\ref{EQ:SIMPER}), we can compare the similarity of two signals based on the periodicity.
For example, the signals $\mathbf{x}$ and $\mathbf{y}$, as shown in Fig. \ref{FIG:SimilarityofTwoSig} (a) and (b), are very different in their waveforms.
However, their periodic similarity $S(\mathbf{x},\mathbf{y})$ is $0.9703$ and characterizes them as very similar, because they have the similar periodic structures, i.e., both signals contain the same periodic components with periods: $7,15,24$ and $27$, as shown in Fig. \ref{FIG:SimilarityofTwoSig} (c) and (d).

\section{Evaluation}
\label{SEC:Evaluation}

To evaluate the effectiveness of the FRSP algorithm in identifying the periods, the proposed algorithm is applied to a set of synthetic signals and is compared with several approaches: Ramanujan periodicity transforms (RPT) \cite{Vaidyanathan2014_1, Vaidyanathan2014_2}, Exactly Periodic Subspace Decomposition (EPSD) \cite{Muresan2003}, and four periodicity transforms (PTs) including \textit{Mbest, BestCorrlation, BestFrequence, Small2Large} proposed in \cite{Sethares99}.
For all these algorithms, we first show their performances of identifying the periods in a synthetic signal, then evaluate their capabilities of detecting periods of signals with the different signal length and the noise level.

\subsection{Illustration of identifying periods}

The synthetic periodic signal is given by
\begin{align} \label{EQ:3PeriodSig}
x[n]=\cos \frac{2\pi n}{q_1} +\cos \frac{2\pi n}{q_2} +\cos \frac{2\pi n}{q_3}
\end{align}
which consists of three cosine signals with the periods $q_1=17, q_2=36$, and $q_3=45$.
Let the length $N$ of the signal be equal to $3060$, which is the least common multiplier of the periods, as shown in Fig. \ref{FIG:MultiPeriodsN}.
The range of the detected periods is from $1$ up to $60$ for these algorithms.
Fig. \ref{FIG:IntegerNPeriods} (a)-(g) show the PESs obtained by using these methods.
Since the signal in Eq. (\ref{EQ:3PeriodSig}) consists of integer periods for each periodic components, the FRSP, RPT EPSD, BestFrequency, and Small2Large can exactly identify three periodic components.
Although there exists a little distortion in the energies of the exactly periodic components for the \textit{BestCorrelation} in Fig. \ref{FIG:IntegerNPeriods} (e), it can also exactly identify the three periods.
However, the \textit{Mbest} in Fig. \ref{FIG:IntegerNPeriods} (d) only correctly identifies the periodic component at period $45$.

\setcounter{figure}{2}
\begin{figure}[b]
        \begin{minipage}[h]{1.0\linewidth}
          \centering
          \centerline{\includegraphics[width=9cm]{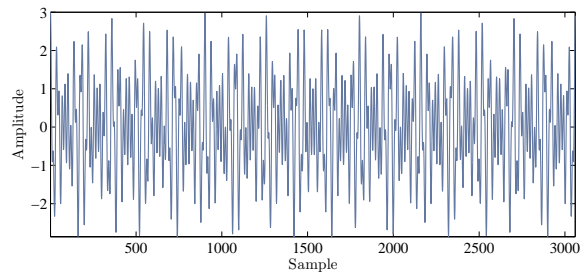}}
        \end{minipage}
        \caption{Periodic signal of length $3060$ with three periods: $17$, $36$, and $45$.}
        \label{FIG:MultiPeriodsN}
\end{figure}

\setcounter{figure}{3}
\begin{figure*}[h]
        \begin{minipage}[htb]{1.0\linewidth}
          \centering
          \centerline{\includegraphics[width=13cm]{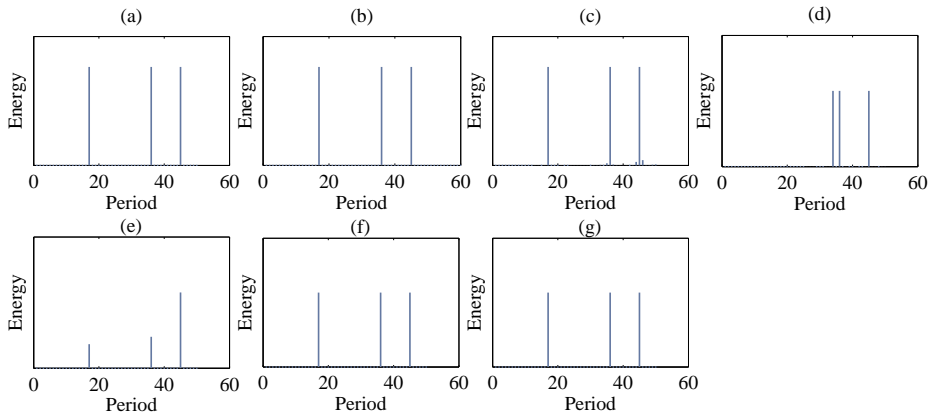}}
        \end{minipage}
        \caption{Periodic energy spectrums of the signal in Fig. (\ref{FIG:MultiPeriodsN}) by using several methods: (a) FRSP; (b) RPT; (c) EPSD; (d) Mbest; (e) BestCorrelation; (f) BestFrequency; (g) Small2Large.}
        \label{FIG:IntegerNPeriods}
\end{figure*}

Next, consider the performance of identifying the hidden periods of the synthetic periodic signal of length $511$ as shown in Fig. \ref{FIG:MultiPeriodsNI}.
Since the length of the signal is not the multiplier of any hidden periods, the signal consists of non-integer periods for three periodic components of periods: $17,36$, and $45$.
Fig. \ref{FIG:NonIntegerNPeriods} shows the PESs obtained by using these algorithms.
The FRSP in Fig. \ref{FIG:NonIntegerNPeriods} (a) can exactly identify three hidden periods when the length of the signal is reduced from $3060$ to $511$.
However, the RPT in Fig. \ref{FIG:NonIntegerNPeriods} (b) is completely failed since none of the hidden periods $17,36,45$ is the divisors of the signal length $N=511$.
In fact, two false periods $1$ and $511$ (the second period peak is not shown in this plot) are detected by the RPT.
Fig. \ref{FIG:NonIntegerNPeriods} (c) show that some false periods are obviously detected by the ESPD due to the same reason as the RPT.
Although several false periods are detected in Fig. \ref{FIG:NonIntegerNPeriods} (e), the \textit{BestCorrlation} achieves the similar result when the length of the signal is changed from $3060$ to $511$.
Both the \textit{BestFrequency} in Fig. \ref{FIG:NonIntegerNPeriods} (f) and \textit{Small2Large} in Fig.  \ref{FIG:NonIntegerNPeriods} (g) can correctly identify two periods $36$ and $45$, in spite of missing the period $17$.
Although several false periods are detected,
the \textit{Mbest} in Fig. \ref{FIG:NonIntegerNPeriods} (d) identifies all three periods and
obtains a better result when the length of the signal is changed.

\subsection{Robustness of identifying hidden periods}

Instead of using the signal in Eq. (\ref{EQ:3PeriodSig}) with fixed periods and length, we generate the test signal in a more general way to investigate the robustness of these algorithms for identifying hidden periods.
The test signal is mixed by four periodic components, each of which is generated as follows:
1) Randomly selecting a period $q \in [1,100]$;
2) Generating a random component of length $q$ and repeating it multiple times.

To evaluate the performances of these algorithms for identifying hidden periods of the test signal of different length, the signal length $N$ is varied from $100$ to $1000$.
For each algorithm, the range of the detected period is from $1$ up to $120$, and the maximum of the iteration is $10$.
For each signal length $N$, the detection process is repeated $50$ times with randomly selecting four periods and random components.
Than, the periodic similarities of these algorithms are calculated by using Eq. (\ref{EQ:SIMPER}) between the original test signal and detection results of these algorithms.
Since the RPT can correctly detect any periods which are not the divisors of the length of the tested signal, the RPT is not used for this test.
Fig. \ref{FIG:DetectPeriodsinN_with_std} shows the plot of the average periodic similarities of these algorithms as functions of the signal length $N$.
According to the results, the FRSP achieves the best performance among all these algorithms.
Especially, it shows that even if the length of the tested signal is $200$, which may only contain two periods of the periodic component with the hidden period $100$, the FRSP also can obtain nearly $0.8$ of the periodic similarity.
Note that, the \textit{BestCorrelation} and \textit{Mbest} achieve better performance than EPSD, \textit{BestFrequency} and \textit{Small2Large}.

\setcounter{figure}{4}
\begin{figure}[h]
        \begin{minipage}[h]{1.0\linewidth}
          \centering
          \centerline{\includegraphics[width=9cm]{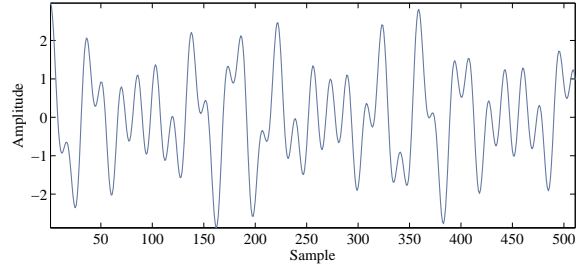}}
        \end{minipage}
        \caption{Periodic signal of length $511$ with three periods: $17$, $36$, and $45$.}
        \label{FIG:MultiPeriodsNI}
\end{figure}

\setcounter{figure}{5}
\begin{figure*}[h]
        \begin{minipage}[b]{1.0\linewidth}
          \centering
          \centerline{\includegraphics[width=13cm]{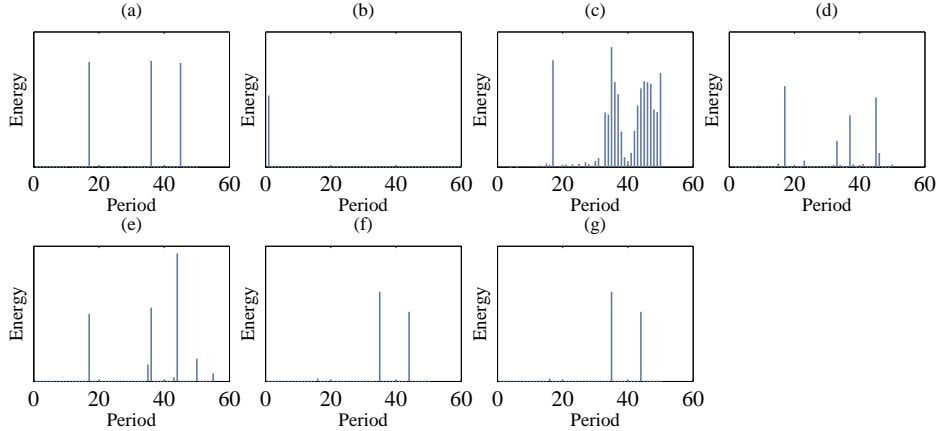}}
        \end{minipage}
        \caption{Periodic energy spectrums of the signal in Fig. (\ref{FIG:MultiPeriodsNI}) by using several methods: (a) FRSP; (b) RPT; (c) EPSD; (d) Mbest; (e) BestCorrelation; (f) BestFrequency; (g) Small2Large.}
        \label{FIG:NonIntegerNPeriods}
\end{figure*}

Similarly, the performances of these algorithms for identifying hidden periods in noise are evaluated by adding the white noise to the tested signal of fixed length $500$.
The white noise, which comes from NOISEX-92 database \cite{NOISEX92}, is added to the tested signal at the signal-to-noise ratio (SNR) varied from $-20$ to $40$ dB.
For each algorithm, the range of the detected period is from $1$ up to $120$, and the maximum of the iteration is $10$.
For each noise level (SNR), the detection process is repeated $50$ times with randomly selecting four periods and random components.
Fig. \ref{FIG:DetectPeriodsinNoise} shows the plot of the average periodic similarities of these algorithms as functions of SNR of the tested signal.
According to the results, the FRSP achieves the best performance among all these algorithms.
Especially, it shows that even when the $\text{SNR}$ is equal to $0$ dB, the FRSP also can obtain nearly $0.8$ of the periodic similarity.
Note that, the \textit{BestCorrelation} and \textit{Mbest} achieve better performance than EPSD, \textit{BestFrequency} and \textit{Small2Large}.

\section{Conclusion}
\label{SEC:CON}

In this paper, we presented a novel algorithm named Ramanujan subspace pursuit (RSP) for period estimation and periodic decomposition of a signal based on the Ramanujan subspace.
The RSP is a greedy iterative algorithm and is implemented by selecting and removing the most dominant periodic component from the residual signal in the current iteration.
The RSP can identify any periodic components corresponding to the periods from $1$ the maximum period $Q$ and hence successfully overcomes the limitation in ESPD and RPT, which only the periods being the divisors of the length of the signal can be accurately detected.
In addition to the period estimation, the RSP also provides a way to decompose the signal into the sum of the exactly periodic components, just like the method of the atomic decomposition used in signal processing.
Moreover, the fast RSP, based on the MLE of the energy of the periodic component in the periodic subspace, has a lower computational cost and can decompose a signal of length $N$ in $ \mathcal{O}(K N\log N)$.
Our results show that the RSP outperforms the current algorithms such as RPT, ESPD and PTs.

\begin{figure}[h]
        \begin{minipage}[htb]{1.0\linewidth}
          \centering
          \centerline{\includegraphics[width=9cm]{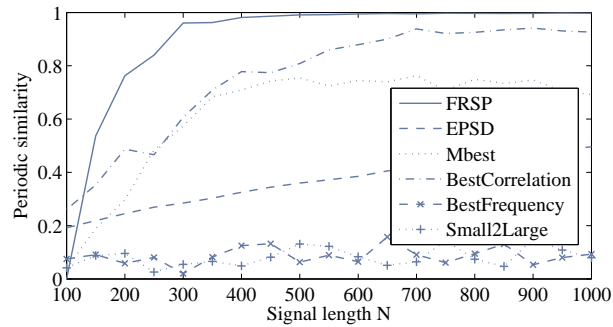}}
        \end{minipage}
        \caption{Average periodic similarity v.s. signal length $N$.}
        \label{FIG:DetectPeriodsinN_with_std}
\end{figure}

\begin{figure}[h]
        \begin{minipage}[b]{1.0\linewidth}
          \centering
          \centerline{\includegraphics[width=9cm]{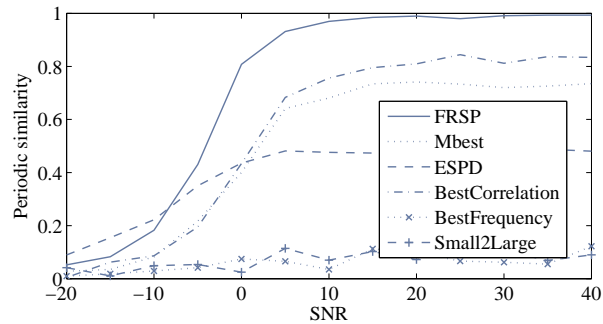}}
        \end{minipage}
        \caption{Average periodic similarity v.s. $SNR$.}
        \label{FIG:DetectPeriodsinNoise}
\end{figure}

\section* {Appendix}

\ifCLASSOPTIONcompsoc
  \section*{Acknowledgments}
\else
  \section*{Acknowledgment}
\fi
 This work was supported in part by the Major Research plan of the National Natural Science Foundation of China (No. 91120303), National Natural Science Foundation of China (No. 91220301), Natural Science Foundation of Heilongjiang Province of China (No. F2015012), and Academic Core Funding of Young Projects of Harbin Normal University of China (No. KGB201225).

\ifCLASSOPTIONcaptionsoff
  \newpage
\fi



%

%

\begin{IEEEbiography}{Deng Shi-Wen}
Shi-Wen Deng received the B.E degree in the Institute of Technology from Jia Mu Si University, JiaMuSi, China, in 1997, the M.E in The school of Computer Science from Harbin Normal University, Harbin, China, in 2005, and the Ph.D degree in the school of Computer Science from Harbin Institute of Technology in 2012. Currently, he is with the School of Mathematical Sciences, Harbin Normal University, Harbin, China. His research interests are in the area of speech and audio signal processing, including content-based audio analysis, noise suppression, speech/audio classification/detection.
\end{IEEEbiography}

\begin{IEEEbiographynophoto}{Han Ji-Qing}
Ji-Qing Han received the B.S., M.S. in electrical engineering, and Ph.D. degrees in computer science from the Harbin Institute of Technology, Harbin, China, in 1987, 1990, and 1998, respectively. Currently, he is the associate dean of the school of Computer Science and Technology, Harbin Institute of Technology. He is a member of IEEE, member of the editorial board of Journal of Chinese Information Processing, and member of the editorial board of the Journal of Data Acquisition and Processing. Prof. Han is undertaking several projects from the National Natural Science Foundation, 863Hi-tech Program, National Basic Research Program. He has won three Second Prize and two Third Prize awards of Science and Technology of Ministry/Province. He has published more than 100 papers and 2 books. His research fields include speech signal processing and audio information processing.
\end{IEEEbiographynophoto}





\end{document}